\definecolor{refcolor}{RGB}{0,0,190}
\newtheorem{theorem}{Theorem}
\newtheorem{definitionCustom}{Definition}
\renewcommand{\thedefinitionCustom}{\arabic{definition}}
\newcommand{\setdefinitionCustomtag}[1]{
  \let\oldthedefinitionCustom\thedefinitionCustom
  \renewcommand{\thedefinitionCustom}{#1}
  \g@addto@macro\enddefinitionCustom{
    \global\let\thedefinitionCustom\oldthedefinitionCustom}
  }
\newtheorem{thesisCustom}{Thesis}
\renewcommand{\thethesisCustom}{\arabic{thesis}}
\newcommand{\setthesisCustomtag}[1]{
  \let\oldthethesisCustom\thethesisCustom
  \renewcommand{\thethesisCustom}{#1}
  \g@addto@macro\endthesisCustom{
    \global\let\thethesisCustom\oldthethesisCustom}
  }
\theoremstyle{remark}
\theoremstyle{definition}
\newtheorem{observation}{Observation}
\newtheorem{problem}{Problem}
\newtheorem{condition}{Condition}
\renewcommand{\thecondition}{\arabic{condition}}
\newcommand{\setconditiontag}[1]{
  \let\oldthecondition\thecondition
  \renewcommand{\thecondition}{#1}
  \g@addto@macro\endcondition{
    \global\let\thecondition\oldthecondition}
  }
\newtheorem{proposition}{Proposition}
\newtheorem{corollary}{Corollary}
\newtheorem{definition}{Definition}
\theoremstyle{remark}
\newtheorem{remark}{Remark}
\newtheorem{example}{Example}
\theoremstyle{remark}
\newcommand{\toyExampleName}{Toy Example}
\newtheorem{toyExample}{\toyExampleName}
\theoremstyle{definition}
\newcommand{\lessonName}{Lesson}
\newtheorem{lesson}{\lessonName}
\newcommand{\orcid}[1]{\href{https://orcid.org/#1}{\textcolor[HTML]{A6CE39}{\aiOrcid}}}
\def\({\left(}
\def\){\right)}
\def\[{\left[}
\def\]{\right]}
\newcommand{\tn}{\textnormal}
\newcommand{\dsfrac}[2]{\displaystyle{\frac{#1}{#2}}}
\newcommand{\boplus}{\textstyle{\bigoplus}}
\newcommand{\botimes}{\textstyle{\bigotimes}}
\newcommand{\bcup}{\textstyle{\bigcup}}
\newcommand{\MQS}{\ref{def:MQS}}
\newcommand{\HSF}{\ref{thesis:HSF}}
\newcommand{\struct}{\mc{S}}
\newcommand{\kind}{\mc{K}}
\newcommand{\hilbert}{\mathcal{H}}
\newcommand{\mc}[1]{\mathcal{#1}}
\newcommand{\ms}[1]{\mathscr{#1}}
\newcommand{\wh}[1]{\widehat{#1}}
\newcommand{\dwh}[1]{\wh{\rule{0ex}{1.3ex}\smash{\wh{\hfill{#1}\,}}}}
\newcommand{\wt}[1]{\widetilde{#1}}
\newcommand{\R}{\mathbb{R}}
\newcommand{\C}{\mathbb{C}}
\newcommand{\N}{\mathbb{N}}
\newcommand{\abs}[1]{\left|#1\right|}
\newcommand{\norm}[1]{\lVert#1\rVert}
\newcommand{\de}{\operatorname{d}}
\newcommand{\ii}{\operatorname{i}}
\newcommand{\ee}{\operatorname{e}}
\newcommand{\tr}{\operatorname{tr}}
\newcommand{\ie}{\textit{i.e.}\ }
\newcommand{\eg}{\textit{e.g.}\ }
\newcommand{\cf}{\textit{cf.}\ }
\newcommand{\etal}{\textit{et al.}}
\newcommand{\su}{\mathfrak{su}}
\newcommand{\OO}{\tn{O}}
\newcommand{\schrod}{Schr\"odinger}
\newcommand{\bra}[1]{\langle#1|}
\newcommand{\ket}[1]{|#1\rangle}
\newcommand{\kett}[1]{|\!\!|#1\rangle\!\!\rangle}
\newcommand{\braket}[2]{\langle#1|#2\rangle}
\newcommand{\ketbra}[2]{|#1\rangle\langle#2|}
\newcommand{\Herm}{\tn{Herm}}
\newcommand{\meanvalue}[2]{\langle{#1}\rangle_{#2}}
\newcommand{\x}{\mathbf{x}}
\newcommand{\xThree}{\boldsymbol{x}}
\newcommand{\pThree}{\boldsymbol{p}}
\newcommand{\q}{\mathbf{q}}
\newcommand{\p}{\mathbf{p}}
\newcommand{\n}{\mathbf{n}}
\newcommand{\intl}[1]{\int\limits_{#1}}
\def\sref #1{\S\ref{#1}}
\newcommand{\image}[3]{
\begin{figure}[!ht]
\centering
\includegraphics[width=#2\textwidth]{#1}
\caption{\label{#1}#3}
\end{figure}
}
\begin{document}

\title[3D-Space and the preferred basis cannot uniquely emerge]{3D-Space and the preferred basis\\ cannot uniquely emerge from\\ the quantum structure}

\author{Ovidiu Cristinel Stoica\ \orcidlink{0000-0002-2765-1562}}
\address{
 Dept. of Theoretical Physics, NIPNE---HH, Bucharest, Romania. \\
	Email: \href{mailto:cristi.stoica@theory.nipne.ro}{cristi.stoica@theory.nipne.ro},  \href{mailto:holotronix@gmail.com}{holotronix@gmail.com}
	}%

\begin{abstract}
Hilbert-Space Fundamentalism (HSF) states that the only fundamental structures are the quantum state vector and the Hamiltonian, and from them everything else emerge uniquely, including the 3D-space, a preferred basis, and a preferred factorization of the Hilbert space.

In this article it is shown that whenever such a structure emerges from the Hamiltonian and the state vector alone, if it is physically relevant, it is not unique.

Moreover, HSF leads to strange effects like ``passive'' travel in time and in alternative realities, realized simply by passive transformations of the Hilbert space.

The results from this article affect all theories that adhere to HSF, whether they assume branching or state vector reduction (in particular the version of Everett's Interpretation coined by Carroll and Singh ``Mad-dog Everettianism''), various proposals based on decoherence, proposals that aim to describe everything by the quantum structure alone, and proposals that spacetime emerges from a purely quantum theory of gravity.
\end{abstract}


\maketitle
\tableofcontents

\section{Introduction}
\label{s:intro}

For a closed system, Quantum Mechanics (QM) assumes a \emph{Hilbert space} $\hilbert$, a \emph{Hamiltonian operator} $\wh{H}$, and a \emph{state vector} $\ket{\psi(t)}\in\hilbert$ which depends on time according to the {\schrod} equation
\begin{equation}
\label{eq:schrod}
\ii\hbar\frac{\de}{\de t}\ket{\psi(t)}=\wh{H}\ket{\psi(t)}.
\end{equation}

\setdefinitionCustomtag{MQS}
\begin{definitionCustom}
\label{def:MQS}
In the following, we call  \emph{minimalist quantum structure (\MQS)} the triple
\begin{equation}
\label{eq:MQS}
\bigl(\hilbert,\wh{H},\ket{\psi}\bigr)
\end{equation}
where $\ket{\psi(t)}$ obeys the {\schrod} equation \eqref{eq:schrod}.
\end{definitionCustom}

Is the {\MQS} sufficient to uniquely determine all other aspects of the physical world, including the $3$D-space, a preferred basis, and a preferred factorization of the Hilbert space into factors that represent elementary particles?

According to the following Thesis, sometimes named \emph{Hilbert-space fundamentalism} (\cf \cite{Carroll2021RealityAsAVectorInHilbertSpace}), the answer is yes.

\setthesisCustomtag{HSF}
\begin{thesisCustom}[Hilbert-space Fundamentalism Thesis]
\label{thesis:HSF}
Everything about a physical system, including the $3$D-space, a preferred basis, a preferred factorization of the Hilbert space (needed to represent subsystems, {\eg} particles), emerge uniquely (or essentially uniquely) from the triple $(\hilbert,\wh{H},\ket{\psi(t)})$.
\end{thesisCustom}

It seems right that unitary symmetry requires us to interpret the wavefunction $\psi(\x)=\braket{\x}{\psi}$ on the nonrelativistic configuration space $\R^{3\n}$ as just a particular representation favored only if we pick a preferred basis $(\ket{\x})_{\x\in\R^{3\n}}$ of the Hilbert space, while the only real structure is the abstract state vector $\ket{\psi}$. Taking as fundamental the state vector $\ket{\psi}$ seems to make more sense than the wavefunction, since a preferred representation of the Hilbert space would be akin to the notion of an absolute reference frame in space. And indeed this is sometimes the stated position in the discussions about a preferred basis, a preferred factorization, or emergent space or spacetime.
This is claimed to be true in certain approaches to Quantum Gravity, formulations of Everett's Interpretation and its Many-Worlds variants (MWI) \cite{Everett1957RelativeStateFormulationOfQuantumMechanics,Everett1973TheTheoryOfTheUniversalWaveFunction,Saunders2010ManyWorldsEverettQuantumTheoryReality,Wallace2012TheEmergentMultiverseQuantumTheoryEverettInterpretation,SEP-Vaidman2018MWI}, but also in the \emph{Consistent Histories} approaches \cite{Griffiths1984ConsistentHistories,Omnes1988ConsistentHistoriesLogicalReformulationOfQuantumMechanicsIFoundations,Omnes1992ConsistentInterpretationsQuantumMechanics,Omnes1999UnderstandingQuantumMechanics,GellMannHartle1990DecoheringHistories,GellMannHartle1990QuantumMechanicsInTheLightOfQuantumCosmology}.
Presumably, decoherence \cite{Zeh1970OnTheInterpretationOfMeasurementInQuantumTheory,Zeh1996WhatIsAchievedByDecoherence,JoosZeh1985EmergenceOfClassicalPropertiesThroughInteractionWithTheEnvironment,Zurek1991DecoherenceAndTheTransitionFromQuantumToClassical,Zurek2003DecoherenceEinselectionAndTheQuantumOriginsOfTheClassical,Zurek2006DecoherenceTransitionFromQuantumToClassicalRevisited,KieferJoos1999DecoherenceConceptsAndExamples,JoosZehKieferGiuliniKupschStamatescu2003DecoherenceAndTheAppearanceOfClassicalWorldInQuantumTheory,Schlosshauer2007DecoherenceAndTheQuantumToClassicalTransition} solves the preferred basis problem and leads to the emergence of the classical world.
Therefore, claims that the preferred basis problem is solved are very common \cf Wallace \cite{Wallace2010DecoherenceAndOntology,Wallace2012TheEmergentMultiverseQuantumTheoryEverettInterpretation}, Tegmark \cite{Tegmark1998InterpretationQMManyWorldsOrManyWords}, Brown and Wallace \cite{BrownWallace2005BohmVsEverett}, Zurek \cite{Zurek2003DecoherenceEinselectionAndTheQuantumOriginsOfTheClassical}, Schlosshauer \cite{Schlosshauer2006MinimalNoCollapseQuantumMechanics,Schlosshauer2007DecoherenceAndTheQuantumToClassicalTransition}, Saunders \cite{Saunders2010ManyWorldsAnIntroduction,Saunders2010ManyWorldsEverettQuantumTheoryReality} \textit{etc}.

In the case of MWI, this claim was criticized for circularity \cite{Kent1990AgainstManyWorldsInterpretations}.
Other authors stated clearly that the $3$D-space and a factorization are necessary prerequisites of the theory \cite{Vaidman2018OntologyOfTheWavefunctionAndMWI,Wallace2012TheEmergentMultiverseQuantumTheoryEverettInterpretation}.

But Carroll and Singh claimed to prove that the Hamiltonian, moreover, its spectrum, is sufficient to determine an essentially unique space structure (\cite{CarrollSingh2019MadDogEverettianism,Carroll2021RealityAsAVectorInHilbertSpace}, p. 99)
\begin{quote}
a generic Hamiltonian will not be local with respect to any decomposition, and 
for the special Hamiltonians that can be written in a local form, the
decomposition in which that works is essentially unique.
\end{quote}

In \cite{CarrollSingh2019MadDogEverettianism}, p. 95, they wrote about the {\MQS} that
\begin{quote}
Everything else--including space and fields propagating on it--is emergent from these minimal elements.
\end{quote}

Carroll and Singh based their reconstruction of space on the results obtained by Cotler {\etal} \cite{CotlerEtAl2019LocalityFromSpectrum} regarding the uniqueness of factorization of the Hilbert space, so that the interaction encoded in the Hamiltonian is ``local'' in a certain sense. The result obtained by Cotler {\etal} states in fact that such a factorization is ``almost always'' unique (\cite{CotlerEtAl2019LocalityFromSpectrum}, p. 1267).
We will see that these emerging structures, if physically relevant, are not unique.

In this article we will give general proofs that, whenever the Hamiltonian leads to a tensor product decomposition of the Hilbert space, a $3$D-space structure, or a preferred generalized basis, it leads to many physically distinct structures of the exact same type, where by ``physically distinct'' it is understood that they exhibit physical differences.
In the discussion of recovering space, for simplicity, we will focus on $3$D physical spaces, but the proofs in this article apply to any number of dimensions.

\definecolor{greenPsi}{rgb}{0.0, 0.375, 0.0}
\definecolor{blueStruct}{rgb}{0.0, 0.0, 1.0}
\definecolor{redStruct}{rgb}{1.0, 0.0, 0.0}
\image{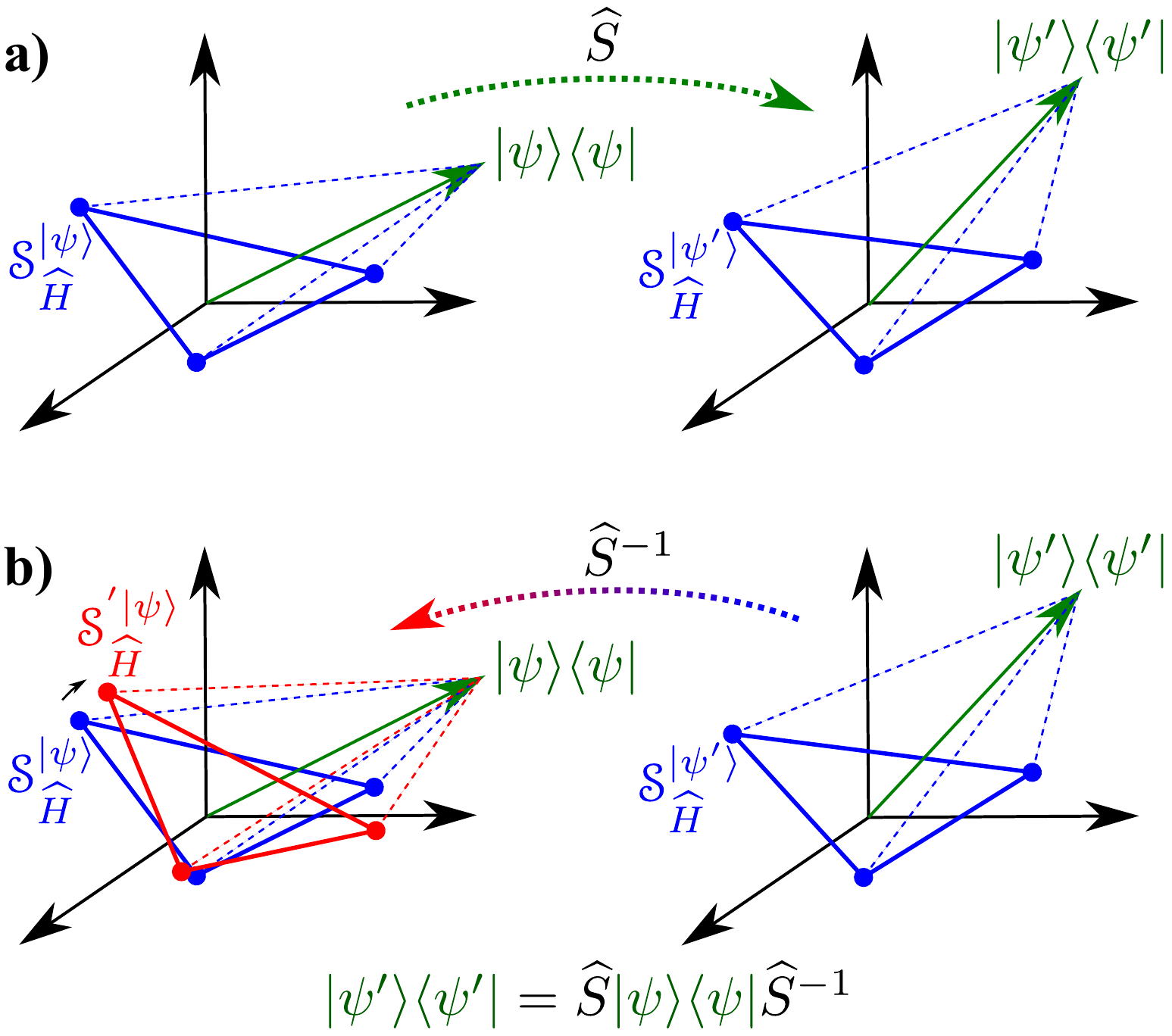}{0.45}{Schematic representation of the idea of the proof in the vector space of Hermitian operators (see Theorem \ref{thm:nogo}).
\\
{\bf a)} The candidate preferred structures $\textcolor{blueStruct}{\struct_{\wh{H}}^{\ket{\psi}}}$ and $\textcolor{blueStruct}{\struct_{\wh{H}}^{\ket{\psi'}}}$ are symbolized as solid blue triangles. The vertices represent the preferred observables of the structure, and the edges relations between the observables.
The dashed blue lines symbolize their relations with the corresponding states, represented by $\ket{\psi}\bra{\psi}$ and $\ket{\psi'}\bra{\psi'}=\wh{S}\ket{\psi}\bra{\psi}\wh{S}^{-1}$, where $[\wh{S},\wh{H}]=0$.
For the candidate $3$D-space and other candidate preferred structures expected to emerge, for any state vector $\ket{\psi}$ there are state vectors $\ket{\psi'}\neq\ket{\psi}$ for which these relations are different. Without this, the structure is not physically relevant. An obvious example is obtained by taking $\wh{S}$ to be a time evolution operator, because we expect that $\ket{\psi}$ changes in time with respect to space, the preferred bases, and the preferred factorization.
In addition to the time evolution ones, other choices of $\wh{S}$ give different families of unitary transformations.
\\
{\bf b)} Since the structure $\textcolor{blueStruct}{\struct_{\wh{H}}^{\ket{\psi}}}$ is invariant to any unitary transformation $\wh{S}$ that commutes with $\wh{H}$, $\wh{S}^{-1}$  transforms $\textcolor{blueStruct}{\struct_{\wh{H}}^{\wh{S}\ket{\psi}}}$ into another structure for $\ket{\psi}$, $\textcolor{redStruct}{\struct_{\wh{H}}^{'\ket{\psi}}}=\wh{S}^{-1}\[\textcolor{blueStruct}{\struct_{\wh{H}}^{\wh{S}\ket{\psi}}}\]$, which is of the same kind as $\textcolor{blueStruct}{\struct_{\wh{H}}^{\ket{\psi}}}$.
Since the relations between $\textcolor{redStruct}{\struct_{\wh{H}}^{'\ket{\psi}}}$ and $\ket{\psi}$ are different from those between $\textcolor{blueStruct}{\struct_{\wh{H}}^{\ket{\psi}}}$ and $\ket{\psi}$, we conclude that $\textcolor{blueStruct}{\struct_{\wh{H}}^{\ket{\psi}}}$ is not unique. Theorem \ref{thm:nogo} is more general, since it also covers uniqueness up to a physical equivalence.}

In \sref{s:simple-examples} we illustrate the main ideas by providing simple examples and learning from them.
In \sref{s:examples-NRQM-space}, we show that there are infinitely many physically distinct choices of the $3$D-space in nonrelativistic QM. These examples illustrate the main idea of the proof, which will be given in full generality in Sec. \sref{s:proof-TS}. 

The kinds of candidate preferred structures are defined in full generality in \sref{s:proof-TS:kinds}.
The conditions of uniqueness (up to non-physical differences) and physical relevance (the ability to distinguish physically distinct states) are the object of \sref{s:proof-TS:cond}.
The main theorem, showing that if a candidate preferred structure is physically relevant there are more physically distinct such structures, is proved in \sref{s:proof-TS:thm}.
The idea of the proof is illustrated in Fig. \ref{sketch-proof.pdf}. 

In \sref{s:abundance} we show that non-uniqueness is very rich, and we discuss an interesting particular type of Hamiltonian.

In Sec. \sref{s:proof-TS-applications} we apply the main theorem from Sec. \sref{s:proof-TS} to prove the non-uniqueness of generalized ``preferred'' bases in \sref{s:proof-PBS}, of factorizations into subsystems in \sref{s:proof-TPS}, of $3$D-space structures as proposed by Carroll \& Singh and by Giddings in \sref{s:proof-TPS-space}, of $3$D-space structures in Quantum Gravity and in general in \sref{s:proof-emergent-space}, of generalized bases based on coherent states in \sref{s:proof-coherent_states}, of environmental decoherence in \sref{s:proof-PBS-sub}, and of emergent quasi-classicality, exemplified with Carroll \& Singh's \emph{quantum mereology} in \sref{s:proof-classicality}.

In Sec. \sref{s:passive_travel} we show that assuming the {\MQS} to be the only fundamental structure has strange consequences: the state vector representing the present state equally represents all the past and future states, as well as alternative realities.
In Sec. \sref{s:avoid-problem} we discuss the possible options with which the results from this article leave us, connecting them with the various interpretations of QM.

\section{Main ideas of the proof}
\label{s:main-ideas}

In this Section we will illustrate the main ideas of the proof, given in Sec. \sref{s:proof-TS}, by examples.

In \sref{s:simple-examples} we will give very simple examples and extract lessons from them, to understand the main ideas.

Then, in \sref{s:examples-NRQM-space}, we will illustrate the main idea with an explicit proof for the case of nonrelativistic Quantum Mechanics (NRQM).

\subsection{Learning from simple examples}
\label{s:simple-examples}

\begin{toyExample}
\label{toyExample:finite-dim-real}
Consider a $3$D real vector space, endowed with the usual Euclidean scalar product.
We assume as \emph{given structure} a unit vector $\vec{u}$, and we need to recover from it a \emph{structure of interest}, for example a unit vector $\vec{v}$ required to be perpendicular to $\vec{u}$.

If we have a solution $\vec{v}_0\perp \vec{u}$, all other solutions can be obtained by \emph{active} rotations of $\vec{v}_0$ around $\vec{u}$. If $\vec{v}_1:=\vec{u}\times \vec{v}_0$, then the general solution is $\vec{v}(\theta)=\cos\theta \vec{v}_0+\sin\theta \vec{v}_1$.

Therefore, there are infinitely many solutions, parametrized by a continuous parameter $\theta\in[0,2\pi)$.

For a general dimension $n$, the general solution depends on $(n-1)(n-2)/2$ continuous parameters, since the transformations that leave invariant the structure of an Euclidean vector space and a unit vector form a group isomorphic to the orthogonal group $\OO(n-1)$.
\end{toyExample}

\begin{remark}
\label{rem:toyExample:finite-dim-real:physical}
A physical interpretation of {\toyExampleName} \ref{toyExample:finite-dim-real} is that we know the direction of the electric field at a point $\x\in\R^3$, and we want to find out the direction of the magnetic field at $\x$.
\end{remark}

{\toyExampleName} \ref{toyExample:finite-dim-real} may seem too limited to apply to general structures that can be recovered from the {\MQS}, but the difference is only ``quantitative''.
To see this, we gradually generalize {\toyExampleName} \ref{toyExample:finite-dim-real} in the following ways.

\begin{toyExample}
\label{toyExample:finite-dim-real-line}
Instead of a unit vector $\vec{u}\in\R^n$, the given structure can be a line, and we want to find out a unit vector perpendicular to it. Then, the result will be the same as in {\toyExampleName} \ref{toyExample:finite-dim-real}.
But if we are given a line and we want to find a line perpendicular to it, since two opposite unit vectors determine the same line, in the $3$D Euclidean case the parameter $\theta$ should be restricted to the interval $[0,\pi)$ to avoid duplicate solutions.
\end{toyExample}

\begin{toyExample}
\label{toyExample:finite-dim-real-angle}
Suppose we are given a line $a$ in the Euclidean vector space $\R^3$ and the structure of interest is another line $b$ that makes an angle $\alpha$ with $a$. Since in a vector space a line can be determined by a unit vector $\vec{u}$ by $a=\{c\vec{u}|c\in\R\}$, the line contains the origin.

If we know a solution $b_0$, we can obtain all the other solutions by rotating $b_0$ around $a$. If $\alpha\in(0,\pi/2)$, the different solutions are parametrized by the rotation angle $\theta\in[0,2\pi)$, similar to {\toyExampleName} \ref{toyExample:finite-dim-real}.

If $\alpha=\pi/2$, the different solutions are parametrized by the rotation angle $\theta\in[0,\pi)$, as in {\toyExampleName} \ref{toyExample:finite-dim-real-line}.

If $\alpha=0$, then there is a unique solution, $b=a$.
\end{toyExample}

\begin{toyExample}
\label{toyExample:finite-dim-real-subspaces}
Suppose that the given structure is a subset of vector subspaces of the Euclidean vector space $\R^n$, and the structure of interest a vector $\vec{v}\in\R^n$. The condition defining $\vec{v}$ should then be replaced by various conditions expressing the relations between $\vec{v}$ and these subspaces.
For example, $\vec{v}$ may be required to be perpendicular to some of them, to be included in one of them, or, in general, to make various angles with each of them.

More generally, we can replace both the given structure and the structure of interest with multiple vectors, lines, subspaces of $\R^n$, circles, spheres, ellipsoids, hyperboloids and various other submanifolds of $\R^n$, required to be in certain relations with one another, by specific conditions involving angles, distances, or, in general, scalar products.
\end{toyExample}

\begin{toyExample}[General {\MQS}]
\label{toyExample:hilbert}
We can generalize {\toyExampleName} \ref{toyExample:finite-dim-real-subspaces} by replacing the Euclidean vector space $\R^n$ with a complex Hilbert space $\hilbert$ of any dimension.
The given structure can consist of the subspace spanned by a vector $\ket{\psi}\in\hilbert$, and a set of subspaces that form an orthogonal decomposition of $\hilbert$, each of them labeled by a different real number.

This set of labeled subspaces is equivalent to a self-adjoint operator $\wh{H}$, the subspaces being its eigenspaces, and the labels the corresponding eigenvalues.
In other words, the given structure can be a {\MQS}, so this {\toyExampleName} covers the general situation from Thesis {\HSF}.

The structure to be recovered can consist, in its turn, of vectors from $\hilbert$, of subspaces of $\hilbert$, of self-adjoint operators on $\hilbert$, all of them required to be in certain relations both with the elements of the given structure and with one another.
The most general structures of interest will be discussed in Sec. \sref{s:proof-TS}.
\end{toyExample}

From these examples we learn several lessons.

\begin{lesson}
\label{lesson:structures}
Consider the problem of showing that a certain \emph{structure of interest} emerges out of a \emph{given structure}.
In the case of Thesis {\HSF}, both these structures consist of objects defined on the Hilbert space $\hilbert$.
The given structure is the {\MQS} $(\hilbert,\wh{H},\ket{\psi})$.
The structure of interest is similarly defined in terms of vectors, operators, \textit{etc}.
The precise conditions to be satisfied by a structure of interest will be said to specify its \emph{kind} (see \sref{s:proof-TS:kinds}).
\end{lesson}

\begin{lesson}
\label{lesson:tensors}
The most general objects constituting a structure in a vector space $\hilbert$ are the \emph{tensors} on $\hilbert$.
In particular, numbers (scalars), vectors, and linear operators are tensors. Closed subspaces can be defined as eigenspaces of linear operators. Submanifolds of $\hilbert$ of various degrees can be defined by tensors of various degrees, just like in Euclidean geometry quadric hypersurfaces are defined by bilinear forms.
\end{lesson}

In {\toyExampleName} \ref{toyExample:finite-dim-real} the structure of interest was a vector required to be perpendicular on the given vector.
In {\toyExampleName}
\ref{toyExample:finite-dim-real-subspaces} the relations were expressed using angles or distances.
These are invariants in Euclidean geometry, based on the scalar product.
For the complex case from {\toyExampleName}~\ref{toyExample:hilbert}, we should use the Hermitian scalar product of the $\hilbert$ space.

\begin{lesson}
\label{lesson:relations}
The objects of the structure of interest are required to \emph{satisfy specific conditions}, expressed as relations with one another and with the objects of the structure of interest.
\end{lesson}

\begin{lesson}
\label{lesson:invariance-all}
If we transform simultaneously all the objects involved while keeping the structure of the Hilbert space, we need to use unitary transformations.
Under such transformations, the relations between the objects remain unchanged.
Therefore, these relations can be expressed, using the scalar products, in terms of \emph{invariants} \cite{Weyl1946ClassicalGroupsInvariantsAndRepresentations,Rota2001WhatIsInvariantTheoryReally}.
\end{lesson}

Lessons \ref{lesson:structures}, \ref{lesson:tensors}, \ref{lesson:relations}, \ref{lesson:invariance-all} will be taken into account in \sref{s:proof-TS:kinds}, where the general definition of the kind of a structure will be given in terms of tensors and conditions, which are invariant equations or inequations.

\begin{remark}
\label{rem:active-not-passive}
In {\toyExampleName} \ref{toyExample:finite-dim-real} we have seen that by knowing the given vector $\vec{u}$ and a vector $\vec{v}_0$ satisfying the desired condition, all other solutions could be found from vector $\vec{v}_0$ by \emph{active} rotations that preserve the vector $\vec{u}$.
The rotations have to be \emph{active, not passive}, because a passive rotation would lead to the same solution expressed in a different basis.
\end{remark}

\begin{lesson}
\label{lesson:invariance-fix}
If we know a structure of interest satisfying the defining conditions, we can find other similar structures by applying \emph{active} symmetry transformations that preserve the given structure.
In the case of a Hilbert space, the symmetry transformations are unitary transformations.
If the structure of interest is supposed to be determined by the Hamiltonian $\wh{H}$, the transformations that can be used have to commute with $\wh{H}$.
If the structure of interest is supposed to be determined by $\wh{H}$ and $\ket{\psi}$, the transformations that can be used have to commute with $\wh{H}$ and leave $\ket{\psi}$ unchanged.
\end{lesson}

\begin{lesson}
\label{lesson:unique}
It is possible that a given structure determines a unique structure of a particular kind.

In {\toyExampleName} \ref{toyExample:finite-dim-real-angle}, if $\alpha$ is a multiple of $\pi$, the structure of interest is unique, because in this case it has to be identical with the given structure.

Another example: if the given structure consists of two distinct concurrent lines, and the structure of interest is their intersection point, then this can be unique.

It is also possible that the structure of interest is uniquely recovered from the {\MQS}. For example, the eigenspaces and the eigenvalues of $\wh{H}$ are uniquely determined by $\wh{H}$. Also, any operator defined as a polynomial of $\wh{H}$ is unique (Remark \ref{rem:unique}).
\end{lesson}

\begin{remark}
\label{rem:unique-gauge}
We should take into account the possibility that even if the solution is not unique, distinct solutions may be \emph{physically equivalent}, \ie they can be transformed into one another by non-physical symmetry transformations.
For example, if we want to determine a vector $\ket{\phi}$, we should expect that $\ee^{i\theta}\ket{\phi}$, $\theta\in\R$, is also a solution satisfying the same conditions that define the structure of interest.
Maybe some unitary transformations that preserve both $\wh{H}$ and $\ket{\psi}$ and transform a solution into another one are phase transformations or gauge transformations.
If all solutions can be related by structure related by such ``unphysical transformations'', we will say that the solution is \emph{essentially unique}, even though it is not strictly unique.
We will discuss this in \sref{s:proof-TS:cond} and take it into account when proving the main result.
\end{remark}

\begin{remark}
\label{rem:too-unique}
If all solutions can be related by unitary transformations that leave $\wh{H}$ (and possibly $\ket{\psi}$) unchanged (as in Remark \ref{rem:unique-gauge}, one may think that they should be considered physically equivalent.
Then, one may say that the solution is essentially unique.
For example, all transformations that leave the vector $\vec{u}$ from {\toyExampleName} \ref{toyExample:finite-dim-real} unchanged can be declared ``unphysical'', so the solution can then be said to be ``essentially unique''.

However, there are structures that are in different relations with different state vectors. For example, an invariant characterizing the relation between an operator $\wh{A}$ and a state vector $\ket{\psi}$ is the mean value $\bra{\psi}\wh{A}\ket{\psi}$. But this can be different for different state vectors $\ket{\psi}\neq\ket{\psi'}$, since in general $\bra{\psi}\wh{A}\ket{\psi}\neq\bra{\psi'}\wh{A}\ket{\psi'}$.
We expect this to be true for example when $\wh{A}$ is a position operator, which means that space itself can be in different relations with different state vectors.
Similarly for a preferred basis, when $\wh{A}$ represents projectors.
And the same is true for a tensor product structure $\hilbert_{A}\otimes\hilbert_{B}$, since the reduced density operator of the system $A$ can be physically different for different states, $\tr_{B}\ket{\psi}\bra{\psi}\neq\tr_{B}\ket{\psi'}\bra{\psi'}$.
In all these cases, the different relations can be observed physically.

Two physically distinct states may be distinguished experimentally, even if they are not orthogonal, because the probability that the outcomes of the measurements are distinct is nonvanishing. They can be distinguished with certainty provided that they are orthogonal or that we can repeat the measurements sufficiently many times. Similarly, physically distinct observables measured on the same state can result in different outcomes.
\end{remark}

From Remark \ref{rem:too-unique} we learn that

\begin{lesson}
\label{lesson:unique-loose}
Some structures of interest, for example the $3$D-space, the tensor product structure, and a preferred basis, can be in different relations with different state vectors, and the differences may be physical.
Such structures will be called \emph{physically relevant}.
We will see in Theorem \ref{thm:nogo} that, regardless how loosely we define ``essential uniqueness'', if the structure is physically relevant, then there will be more physically distinct solutions (Definition \ref{def:equiv}) that can be distinguished by their physical observables.
\end{lesson}

\begin{remark}
\label{rem:preferred}
But what should we understand by a ``preferred structure''?
If a unique, or essentially unique structure of a particular kind emerges, it is said to be \emph{preferred}. Until it is decided whether it is preferred or not, it will be named \emph{candidate preferred structure}.

In other words, our main result, Theorem \ref{thm:nogo}, does not deny the existence of a preferred structure of a particular kind, nor does it claim it exists. It only shows that, \emph{if} such a structure exists \emph{and} it is physically relevant ({\lessonName} \ref{lesson:unique-loose}), \emph{then} it is not unique.

The notion of a ``preferred'' structure of a particular kind can have various meanings, as given by various authors, but we will see that there is a way to treat them in a unified way, independent of the definitions used by various authors.
Broadly speaking, when we are interested in a structure that is not directly specified in addition to the {\MQS} (which only specifies $\hilbert$, $\wh{H}$, and $\ket{\psi}$), such a structure should be characterized by some properties or conditions ({\lessonName}
\ref{lesson:relations}). These conditions depend on the situation and on the approach, but in each case they restrict the possible structures of interest to a particular ``kind'' of structures. Then, if the structures of the same kind have a unique correspondent in reality, we expect that there is a unique structure of that kind, up to physically irrelevant differences. 
So, briefly, when our theory is compatible with more structures of a particular kind, but in reality we know that physically there is only one such structure, we expect that such a structure stands out in a preferred way.
This is the rationale behind Thesis {\HSF}.

If Thesis {\HSF} is correct, this should also apply to Quantum Mechanics as usually formulated, but also to any quantum theory, for example one aiming to include Quantum Gravity which does not assume as given these structures, and claims that they can be recovered uniquely.
If one claims that they can be recovered just from the {\MQS}, they should emerge as preferred structures of a particular kind, in the sense that they have to satisfy certain mathematical conditions, which may depend on the particular theory and the particular proposal to recover them as emergent structures.
If they are supposed to correspond to unique structures from the real world, they should be recovered uniquely, or at least up to some physically irrelevant differences.

Even in the standard formulation of Quantum Mechanics, where all the physically relevant observables are represented by the position and momentum operators and all operators that can be built out of them, there may be structures that are not given \emph{a priori}, but are thought to emerge, for example, the ``preferred basis'' for decoherence.
There are various different proposals to characterize the preferred basis. In general they involve the positions in the classical configuration or phase space of the underlying classical theory that was quantized.
Various proposals give different definitions of the preferred basis, which translate into different kinds of structures and constraining conditions.

All these candidate preferred structures will be treated in a unified way in this article.
While each of these structures is defined in a particular way by the researchers that propose ways of emergence for them, the main theorem from this article is formulated in the most general way possible, so that it applies to any characterization of a preferred structure. The results are not limited to the structures mentioned in the Thesis \ref{thesis:HSF}, being formulated in full generality.
Even if this makes them less intuitive, a general proof may help researchers realize that it just does not matter how they redefine $3$D-space or other structures, or what procedure to recover them they use, because the uniqueness is ruled out for any of them that is physically relevant ({\lessonName} \ref{lesson:unique-loose}).
We will say more about this in Sec. \sref{s:proof-TS}.
\end{remark}

\subsection{Example: Non-uniqueness of space in nonrelativistic Quantum Mechanics}
\label{s:examples-NRQM-space}

Let us consider the following Hamiltonian operator for $\n$ particles in NRQM,
\begin{equation}
\label{eq:schrod_hamiltonian_NRQM}
\wh{H} = \sum_{j=1}^{\n}\frac{1}{2m_j}\wh{\pThree}_{j}^2
+ \mathop{\sum_{j=1}^{\n}\sum_{k=1}^{\n}}_{j\neq k}\wh{V}_{j,k}\(\abs{\wh{\xThree}_j-\wh{\xThree}_k}\),
\end{equation}
where 
$m_j$ is the mass of the particle $j$
and $\wh{\pThree}_{j}^2=\wh{\pThree}_{j}\cdot\wh{\pThree}_{j}$.

For each particle $j$, the operators $\wh{\xThree}_j$ and $\wh{\pThree}_j$ are $3$-vector operators, in the sense that, when expressed in coordinates $(x_{j1},x_{j2},x_{j3})$ of the $3$D-space of the particle $j$, each of them has three operators as components,
\begin{equation}
\label{eq:xpThreeBasis}
\begin{aligned}
\wh{\xThree}_j&=\(\wh{x}_{j1},\wh{x}_{j2},\wh{x}_{j3}\),\\
\wh{\pThree}_{j}&=\(\wh{p}_{j1},\wh{p}_{j2},\wh{p}_{j3}\)\\
&=\(-\ii\hbar\dsfrac{\partial}{\partial x_{j1}},-\ii\hbar\dsfrac{\partial}{\partial x_{j2}},-\ii\hbar\dsfrac{\partial}{\partial x_{j3}}\)\\
&=-\ii\hbar\nabla_{\xThree_j}.\\
\end{aligned}
\end{equation}

In the position representation, the state vector takes the form of a wavefunction defined as 
\begin{equation}
\label{eq:psi_wave_function_x}
\psi(\xThree_1,\ldots,\xThree_{\n},t) := \braket{\xThree_1,\ldots,\xThree_{\n}}{\psi(t)},
\end{equation}
which belongs to the Hilbert space of complex square-integrable functions $L^2\(\R^{3\n}\)$, where $(\xThree_1,\ldots,\xThree_{\n})\in\R^{3\n}$ represents a point in the configuration space.

\begin{remark}
\label{rem:3dspace:nrqm:specs}
In NRQM, the $3$D-space structure is specified by the position operators. 
Following Remark \ref{rem:preferred}, we need to characterize a $3$D-space structure by providing a mathematical structure on the Hilbert space, along with certain conditions to be satisfied by this structure. We can use the following conditions for this:
\begin{enumerate}
	\item There is a complete set of commuting operators $(\wh{x}_{ja})_{a\in\{1,2,3\}, j\in\{1,\ldots,\n\}}$.
	\item The spectrum of each operator $\wh{x}_{ja}$ is $\R$.
	\item The Hamiltonian has the form \eqref{eq:schrod_hamiltonian_NRQM}, where each $\wh{p}_{ja}$ is conjugate of $\wh{x}_{ja}$, so it satisfies $\wh{p}_{ja}=-\ii\hbar\dsfrac{\partial}{\partial x_{ja}}$, where $x_{ja}$ belongs to the spectrum of $\wh{x}_{ja}$.
\end{enumerate}

The position operators specify the positions in the configuration space $\R^{3\n}$, not in the $3$D-space.
But if the interaction part of the potential depends on the distance between the particles in a way that makes it clear when the distance between two particles vanishes, the subspaces of $\R^{3\n}$ corresponding to the position space of each of the $\n$ particles can be mapped one into the other, allowing the recovery of the $3$D-space, see \eg \cite{Albert1996ElementaryQuantumMetaphysics}.
For a set of operators that gives a more direct specification of the $3$D-space see the discussion from \sref{s:proof-emergent-space}.

We will show that there are infinitely many ways to choose a set of operators that have the same mathematical properties as the position and momentum operators, and so that the Hamiltonian has the same form when expressed in terms of these operators.
\end{remark}

Let us consider a unitary transformation $\wh{S}$ of the Hilbert space $\hilbert$.
The transformations by $\wh{S}$ of the operators $\wh{\xThree}_j$ and $\wh{\pThree}_{j}$, for all $j\in\{1,\ldots,\n\}$, are
\begin{equation}
\label{eq:position_operators_transformed}
\begin{cases}
\wh{\wt{\xThree}}_j &:= \wh{S}\wh{\xThree}_j\wh{S}^{-1} \\
\wh{\wt{\pThree}}_j &:= \wh{S}\wh{\pThree}_j\wh{S}^{-1} =
-\ii\hbar\nabla_{\wt{\xThree}_j}, \\
\end{cases}
\end{equation}
where $\wt{\xThree}_j=(\wt{x}_{j1},\wt{x}_{j2},\wt{x}_{j3})$ and each $\wt{x}_{ja}$ belongs to the spectrum of $\wh{\wt{x}}_{ja}$.

Since the spectrum of each of the operators $\wh{x}_{ja}$ and $\wh{p}_{ja}$ is $\R$ and the transformation $\wh{S}$ is unitary, the spectrum of each of the operators $\wh{\wt{x}}_{ja}$ and $\wh{\wt{p}}_{ja}$ is $\R$ as well.

The operator $\wh{S}$ transforms the operators as in \eqref{eq:position_operators_transformed}, and we can represent the state vector $\ket{\psi}$ as a wavefunction
\begin{equation}
\label{eq:psi_wave_function_xi}
\wt{\psi}(\wt{\xThree}_1,\ldots,\wt{\xThree}_{\n},t) = \braket{\wt{\xThree}_1,\ldots,\wt{\xThree}_{\n}}{\psi(t)},
\end{equation}
where
\begin{equation}
\label{eq:position_basis_transformed}
\ket{\wt{\xThree}_1,\ldots,\wt{\xThree}_{\n}} = \wh{S}\ket{\xThree_1,\ldots,\xThree_{\n}}.
\end{equation}

The wavefunction \eqref{eq:psi_wave_function_xi} is also a square-integrable complex function from a space $L^2\(\R^{3\n}\)$, but in general this space is not the same as the one of the wavefunctions of positions.
In general, the parameters $(\wt{\xThree}_j)_j$ are not coordinate transformations of the parameters $(\xThree_j)_j$.
For example, the Fourier transform can be used to move to a momentum representation. But, except for a zero-measure set, most unitary transformations will not even be transformations of the coordinates in the phase space.

The dependence of the potential term $\wh{V}_{j,k}$ on the $3$D distance in the position representation suggests the following.
If there is no interaction, the Hamiltonian reduces to the kinetic term
\begin{equation}
\label{eq:hamiltonian_kinetic}
\wh{T} = \sum_{j=1}^{\n}\frac{1}{2m_j}\sum_{a=1}^{3}\wh{p}_{ja}^2,
\end{equation}
and, unless each particle has a different mass $m_j$, we cannot even recover the number of dimensions, since the only contribution of the three dimensions appears in the expression of the potentials, see \eg \cite{Albert1996ElementaryQuantumMetaphysics}.
But if the potential involves all particles, it is possible to recover the $3$D-space, provided that we know the position configuration space. In this case, we only need to focus on recovering the position configuration space.

The possibility to recover the $3$D-space from the potentials is the reason why, for this example, we also demand that the Hamiltonian has the form \eqref{eq:schrod_hamiltonian_NRQM}.
And it does:

\begin{proposition}
\label{thm:nrqm_hamiltonian-form}
After a unitary transformation $\wh{S}$, the Hamiltonian has the same form as in \eqref{eq:schrod_hamiltonian_NRQM}, but expressed in terms of the operators $(\wh{\wt{\xThree}}_j)_j$ and $(\wh{\wt{\pThree}}_j)_j$ from eq. \eqref{eq:position_operators_transformed}, instead of $(\wh{\xThree}_j)_j$ and $(\wh{\pThree}_j)_j$.
\end{proposition}
\begin{proof}
The kinetic term $\wh{T}$ in \eqref{eq:hamiltonian_kinetic} is a function $T(\wh{\pThree}_1,\ldots,\wh{\pThree}_{\n})$ of the momentum operators $\wh{\pThree}_j$.
Since they all commute, the transformation $\wh{S}\wh{T}\wh{S}^{-1}$ of $\wh{T}$ is the same function $T(\wh{\wt{\pThree}}_1,\ldots,\wh{\wt{\pThree}}_{\n})$, but of the operators $\wh{\wt{\pThree}}_j$.

Similarly, the potential term of the Hamiltonian is a function $\wh{V}(\wh{\xThree}_1,\ldots,\wh{\xThree}_{\n})$ of the position operators, and since all the position operators commute with one another, 
the transformation $\wh{S}\wh{V}\wh{S}^{-1}$ of $\wh{V}$ is the same function $V(\wh{\wt{\xThree}}_1,\ldots,\wh{\wt{\xThree}}_{\n})$, but of the operators $\wh{\wt{\xThree}}_j$.

Therefore, the form of the Hamiltonian remains the same as in eq. \eqref{eq:schrod_hamiltonian_NRQM}, but expressed in terms of the operators $(\wh{\wt{\xThree}}_j)_j$ and $(\wh{\wt{\pThree}}_j)_j$ from eq. \eqref{eq:position_operators_transformed}.
\end{proof}

This suggests that not all reparametrizations defined by unitary transformations \eqref{eq:position_operators_transformed} recover the original $3$D-space, even when $\wh{S}$ commutes with $\wh{H}$. Is it then possible to uniquely recover the $3$D-space from the {\MQS} alone?

\begin{theorem}
\label{thm:NRQM_nogo}
Any procedure to recover the $3$D-space from the NRQM Hamiltonian, and even from the Hamiltonian and the state vector, leads to infinitely many physically distinct solutions.
\end{theorem}
\begin{proof}
We prove the two claims one at a time.

\textbf{I)} \emph{Recovering the $3$D-space from the Hamiltonian only.}

Suppose we found a set of candidate preferred position operators, in which the wavefunction has the form $\psi(\xThree_1,\ldots,\xThree_{\n},t)$, and the Hamiltonian is expressed as in eq. \eqref{eq:schrod_hamiltonian_NRQM}. 
Let us see if we can find other parametrizations that look like the $3$D-space at the time $t_0$.

We take as transformation in \eqref{eq:position_operators_transformed} a unitary transformations $\wh{S}$ commuting with $\wh{H}$, so that $\ket{\psi(t_0)}$ and
\begin{equation}
\label{eq:S_from_S_k_s_inv}
\ket{\psi'(t_0)}:=\wh{S}^{-1}\ket{\psi(t_0)}.
\end{equation}
are not related  by a space isometry or a gauge transformation.
The transformation $\wh{S}$ is \emph{active} (see Remark \ref{rem:active-not-passive} and {\lessonName} \ref{lesson:invariance-fix}).
Since infinitely many linearly independent self-adjoint operators commute with $\wh{H}$, they generate infinitely many families of such unitary transformations. A particular example is
\begin{equation}
\label{eq:S_time_evolution}
\wh{S}=\wh{U}_{t,t_0}:=\ee^{-\frac{\ii}{\hbar}\wh{H}(t-t_0)}
\end{equation}
for some $t\neq t_0$.
Then, $\ket{\psi(t_0)}$ and $\ket{\psi'(t_0)}$ represent distinct states. Since in general there are physical changes in time, there are infinitely many examples of state vectors $\ket{\psi'(t_0)}$ that are physically distinct from $\ket{\psi(t_0)}$.

We apply $\wh{S}$ as a transformation at $t_0$, as in eq. \eqref{eq:position_operators_transformed}.
In the new parametrization $\wt{\xThree}_j$, $\ket{\psi(t_0)}$ has the form \eqref{eq:psi_wave_function_xi}.

If our new $3$D-space structure cannot be distinguished physically from the old one, we should expect that, in the new $3$D-space structure, $\ket{\psi(t_0)}$ looks the same as in the old one.
But when we check if this is true, we find out that it is not the case:
\begin{equation}
\label{eq:psi_wave_function_xi_thm}
\begin{aligned}
\wt{\psi}(\wt{\xThree}_1,\ldots,\wt{\xThree}_{\n},t_0) 
&\stackrel{\eqref{eq:psi_wave_function_xi}}{=} \braket{\wt{\xThree}_1,\ldots,\wt{\xThree}_{\n}}{\psi(t_0)} \\
&\stackrel{\eqref{eq:position_basis_transformed}}{=} \(\bra{\xThree_1,\ldots,\xThree_{\n}}\wh{S}^\dagger\)\ket{\psi(t_0)} \\
&\ =\ \bra{\xThree_1,\ldots,\xThree_{\n}}\(\wh{S}^{-1}\ket{\psi(t_0)}\) \\
&\stackrel{\eqref{eq:S_from_S_k_s_inv}}{=} \braket{\xThree_1,\ldots,\xThree_{\n}}{\psi'(t_0)} \\
&\ =\ \psi'(\xThree_1,\ldots,\xThree_{\n},t_0). \\
\end{aligned}
\end{equation}

Then, in the configuration space of positions obtained by using the unitary transformation $\wh{S}$, the wavefunction is identical to $\psi'(\xThree_1,\ldots,\xThree_{\n},t_0)$, which represents a physically distinct state from $\psi(\xThree_1,\ldots,\xThree_{\n},t_0)$. Hence, we obtained another structure that is similar to the original configuration space, but physically distinct. Since there are infinitely many choices for $\wh{S}$, there are infinitely many physically distinct ways to choose the configuration space of positions, and therefore the $3$D-space.

\textbf{II)} \emph{Recovering the $3$D-space from the Hamiltonian and the state vector.}

Requiring $\wh{S}$ to not only commute with $\wh{H}$, but also to preserve the state vector $\ket{\psi}$, reduces the possible choices for $\wh{S}$, but there are still infinitely many such choices.
Since $\hilbert\cong L^2(\R^{3\n},\C)$, the space of square integrable complex functions on $\R^{3\n}$, the eigenspaces of the Hamiltonian have infinite dimension. Then, on each such eigenspace, there are infinitely many unitary transformations that commute with $\wh{H}$ and preserve the projection of $\ket{\psi}$ on that eigenspace. Since a unitary transformation $\wh{S}$ that commutes with $\wh{H}$ consists of unitary transformations of the eigenspaces of $\wh{H}$, it follows that there are infinitely many different choices for $\wh{S}$, even if it is required to preserve $\ket{\psi}$.

Therefore, if there is a set of position-like operators $(\wh{\wt{\xThree}}_j)_j$, there are infinitely many such sets, defining distinct $3$D-space structures.

Now suppose that, to achieve uniqueness, we declare all of them physically equivalent, so that the $3$D-space structure is identified with their equivalence class. Then, in particular, the equivalence class of $3$D-spaces will be independent of time, and this would mean that it would be impossible to observe changes of the wavefunction with respect to the $3$D-space. So we cannot take the equivalence class as representing the $3$D-space structure, and therefore the $3$D-space structure cannot be unique.
\end{proof}

\begin{remark}
\label{rem:nrqm-space-isometries}
Choosing $\wh{S}$ as in eq. \eqref{eq:S_time_evolution} is general enough to provide examples of distinct $3$D-space structures that cannot be related by coordinate changes of the $3$D-space, because in general the wavefunction at different times differ more than by a $3$D-space isometry.
\end{remark}

\begin{remark}
\label{rem:nrqm-classical}
In NRQM, there is a unique $3$D-space structure, or, equivalently, a unique set of variables representing classical positions and also momenta (up to coordinate changes of the $3$D-space). But if we eliminate this information and we keep only the {\MQS}, when we try to recover them from the {\MQS} we obtain infinitely many possibilities that are not distinguishable, by the mathematical relations they satisfy, from the ones representing ``the actual'' positions and momenta.

An example when we do not know the position and momentum operators is if we have a different theory, based for example on information or probabilities as in various axiomatic approaches, and we recovered the Hilbert space from the axioms.
Suppose we want to recover NRQM with the Hamiltonian \eqref{eq:schrod_hamiltonian_NRQM}, but we don't know how to identify the $3$D-space in our theory.
We can characterize the $3$D-space structure by operators that have the same properties as the position operators in NRQM, and we require the Hamiltonian to be \eqref{eq:schrod_hamiltonian_NRQM} in terms of the position operators and their conjugates (see Remark \ref{rem:3dspace:nrqm:specs}).
Theorem \ref{thm:NRQM_nogo} shows that there are infinitely many different sets of operators that behave just like the position operators, and so that the Hamiltonian is expressed in terms of them and their conjugates just like in \eqref{eq:schrod_hamiltonian_NRQM}.
The way to prove it is by finding unitary transformations that preserve the Hamiltonian.

Therefore, the whole point of this article is that the {\MQS} has to be supplemented with the physical meaning of the structures, because there is no other way to choose a ``preferred'' one.
\end{remark}

\begin{remark}
\label{rem:nrqm_why_S_H_commute}
In the proof of Theorem \ref{thm:NRQM_nogo}, we assumed that $\wh{S}$ commutes with $\wh{H}$. Without making this assumption, it could be objected that it is not a surprise that the resulting $3$D-space structure is different, since the Hamiltonian $\wh{S}\wh{H}\wh{S}^{-1}$ is different from $\wh{H}$.
But if $[\wh{S},\wh{H}]=0$, $\wh{S}\wh{H}\wh{S}^{-1}=\wh{H}$, and the same objection does not apply to Theorem \ref{thm:NRQM_nogo}.
Moreover, in the cases when the condition $[\wh{S},\wh{H}]=0$ can be dropped, for example when the preferred structures are claimed to depend only on the Hamiltonian's spectrum, as in \cite{CotlerEtAl2019LocalityFromSpectrum,CarrollSingh2019MadDogEverettianism,Carroll2021RealityAsAVectorInHilbertSpace} (discussed in \sref{s:proof-TPS-space}), there will be even more physically distinct structures of the same kind, as we will see in Remark \ref{rem:commute}.
\end{remark}

\section{Non-uniqueness of general preferred structures}
\label{s:proof-TS}

To obtain a general proof that applies to all kinds of structures, including generalized basis, tensor product structure, and emerging $3$D-space structure, we need some preparation.
First, we need to define the most general notion of invariant structure on a Hilbert space.
Then, to be able to verify that a structure is ``essentially unique'' in its kind, we will explain how to characterize the situation when the differences between two structures of the same kind do not matter. This is when they are related by ``non-physical'' transformations, like phase or gauge transformations.
And finally, we need to characterize those invariant structures that are physically relevant.
Then, the main theorem will show that if a structure is essentially unique in its kind, it is physically irrelevant.

These results are fully general, they apply whenever in an approach it is expected, for whatever reasons, that out of many structures of a particular kind, an essentially unique one emerges as preferred, so that it corresponds to a unique structure that we observe in reality. Since the proof covers all imaginable situations, we will not be concerned with particular examples in this Section, but several such examples will be discussed in Sec. \sref{s:proof-TS-applications}.

\subsection{General kinds of structures}
\label{s:proof-TS:kinds}

First, we need to characterize a structure. We will do this by specifying its ``kind'', which will be explained. 
The symmetry of the {\MQS} requires us to define such structures as tensor objects over the Hilbert space, and the kinds of the structures as the types of these tensor objects plus unitary invariant conditions that they are required to satisfy \cite{Weyl1946ClassicalGroupsInvariantsAndRepresentations,Rota2001WhatIsInvariantTheoryReally}. The conditions are needed to express what it means to be a ``preferred structure'' of a particular kind.

We denote the space of \emph{tensors} of type $(r,s)$ over $\hilbert$, \ie the space of $\C$-\emph{multilinear} functions from $\botimes^r\hilbert^\ast\otimes\botimes^s\hilbert$ to $\C$, where $\hilbert^\ast$ is the dual of $\hilbert$, by
\begin{equation}
\label{eq:tensors}
\mc{T}^r_s(\hilbert):=\botimes^r\hilbert\otimes\botimes^s\hilbert^\ast.
\end{equation}

The tensor algebra of the Hilbert space $\hilbert$ is
\begin{equation}
\label{eq:tensor_algebra}
\mc{T}(\hilbert):=\boplus_{r=0}^{\infty}\boplus_{s=0}^{\infty}\mc{T}^r_s(\hilbert).
\end{equation}

If $\dwh{A}\in\mc{T}^r_s(\hilbert)$ is a tensor, and $\wh{S}$ is a unitary transformation of $\hilbert$, we denote by $\wh{S}[\dwh{A}]$ the tensor obtained by unitary transformation from the tensor $\dwh{A}$. In particular, scalars $c\in\mc{T}^0_0(\hilbert)\cong\C$ are invariant constants $\wh{S}[c]=c$. For $\ket{\psi}\in\mc{T}^1_0(\hilbert)=\hilbert$, $\wh{S}[\ket{\psi}]=\wh{S}\ket{\psi}$, $\wh{S}[\bra{\psi}]=\bra{\psi}\wh{S}^\dagger$, and for $\wh{A}\in\mc{T}^1_1(\hilbert)$, $\wh{S}[\wh{A}]=\wh{S}\wh{A}\wh{S}^\dagger$.
For general tensors we extend the transformation by linearity and by assuming that it commutes with the tensor product, \ie we express the tensor as a linear combination of tensor products of bra and ket vectors, and we transform each of them accordingly, as it is usually done.

We denote by $\Herm\(\hilbert\)\subset\mc{T}^1_1(\hilbert)$ the space of Hermitian operators on $\hilbert$.


\begin{definition}[Tensor structures]
\label{def:tensor}
Let $\mc{A}$ be a set, let $\theta:\mc{A}\to\N^2$ be a function 
\begin{equation}
\label{eq:tensor_type}
\theta(\alpha)=({r_\alpha},{s_\alpha}),
\end{equation}
and let
\begin{equation}
\label{eq:tensor_multitype}
\mc{T}^{\theta(\alpha)}(\hilbert):=\mc{T}^{r_\alpha}_{s_\alpha}(\hilbert)
\end{equation}
for all $\alpha\in\mc{A}$.
Denote by
\begin{equation}
\label{eq:tensor}
\begin{aligned}
\mc{T}^\theta(\hilbert)
:=&\prod_{\alpha\in\mc{A}}\mc{T}^{\theta(\alpha)}=\prod_{\alpha\in\mc{A}}\mc{T}^{r_\alpha}_{s_\alpha}\\
=&\ \{(\dwh{A}_\alpha)_{\alpha\in\mc{A}}\ |\ (\forall\alpha\in\mc{A})\ \dwh{A}_{\alpha}\in\mc{T}^{r_\alpha}_{s_\alpha}(\hilbert)\}\\
\end{aligned}
\end{equation}
the set of all structures consisting of tensors
\begin{equation}
\label{eq:tensor_A}
(\dwh{A}_\alpha)_{\alpha\in\mc{A}},
\end{equation}
where $\dwh{A}_{\alpha}\in\mc{T}^{\theta(\alpha)}(\hilbert)$ for all $\alpha\in\mc{A}$, and $\prod$ stands for the Cartesian product.
We call the elements of $\mc{T}^\theta(\hilbert)$ \emph{tensor structures} of type $\theta$.
\end{definition}

\begin{definition}[Invariant tensor functions]
\label{def:invariant_tensor_function}
Let $\mc{A}$ be a set and let $\theta:\mc{A}\to\N^2$ a tensor structure type.
An \emph{invariant tensor function} is a function
\begin{equation}
\label{eq:invariant_tensor_function_def}
\mc{F}:\mc{T}^\theta\(\hilbert\)\to\mc{T}(\hilbert)
\end{equation}
which is invariant under unitary symmetries, \ie for any unitary operator $\wh{S}$ on $\hilbert$ and any $(\dwh{A}_\alpha)_{\alpha\in\mc{A}}\in\mc{T}^\theta\(\hilbert\)$,
\begin{equation}
\label{eq:invariant_tensor_function_invariance}
\mc{F}\(\(\wh{S}[\dwh{A}_\alpha]\)_{\alpha}\)=\wh{S}\[\mc{F}\(\(\dwh{A}_\alpha\)_{\alpha}\)\].
\end{equation}
\end{definition}

If $\mc{F}$ is valued in $\R$ or $\C$, it is a \emph{scalar} function, and eq. \eqref{eq:invariant_tensor_function_invariance} becomes
\begin{equation}
\label{eq:invariant_tensor_function_invariance_scalar}
\mc{F}\(\(\wh{S}[\dwh{A}_\alpha]\)_{\alpha}\)=\mc{F}\(\(\dwh{A}_\alpha\)_{\alpha}\).
\end{equation}

In addition to specifying the type of a tensor structure, we also need to specify what conditions it has to satisfy.

Let $\mc{C}$ be an invariant tensor function of tensor structures of type $\theta$
\begin{equation}
\label{eq:constraint}
\mc{C}:\mc{T}^\theta\(\hilbert\)\to\mc{T}(\hilbert).
\end{equation}

The function $\mc{C}$ can be used to define \emph{conditions} to be satisfied by tensor structures of type $\theta$, which are equations or inequations,
\begin{equation}
\label{eq:condition}
\mc{C}\(\(\dwh{A}^{\ket{\psi}}_\alpha\)_{\alpha\in\mc{A}}\)=0 \tn{ }(\tn{or }\geq 0 \tn{ or }>0).
\end{equation}

It makes sense to have a condition as an inequation only if it makes sense to speak about invariant order relations like $\geq 0$ or $>$. This is the case when the function $\mc{C}$ has values in $\R$, but also when its values are self-adjoint operators on $\hilbert$, since then we can talk about positive operators, see for example equation \eqref{eq:POVM_condition_positive_definite}.

To define the kinds of structures that depend only on the \MQS, we need conditions based on invariant functions that depend on the tensor structure, on the Hamiltonian, and possibly on the state vector.

\begin{definition}[Kind]
\label{def:kind}
A \emph{kind} $\kind=\{\mc{C}_{\beta}\}_{\beta\in\mc{B}}$ is a set of conditions defined by invariant tensor functions
\begin{equation}
\label{eq:constraints}
\mc{C}_{\beta}:\mc{T}^\theta\(\hilbert\)\times\Herm(\hilbert)\times\hilbert\to\mc{T}(\hilbert),
\end{equation}
where $\mc{B}$ is a set assumed to be the union of up to three sets, $\mc{B}=\mc{B}_{=}\cup\mc{B}_{\geq}\cup\mc{B}_{>}$.
A tensor structure $\(\dwh{A}_\alpha\)_{\alpha\in\mc{A}}$ is of the kind $\kind=\{\mc{C}_{\beta}\}_{\beta\in\mc{B}}$ if for any $\beta\in\mc{B}$ and $\ket{\psi}\in\hilbert$,
\begin{equation}
\label{eq:defining_conditions}
\mc{C}_{\beta}\(\(\dwh{A}^{\ket{\psi}}_\alpha\)_{\alpha\in\mc{A}},\wh{H},\ket{\psi}\)
\begin{cases}
=0 & \tn{if }\beta\in\mc{B}_{=},\\
\geq 0 & \tn{if }\beta\in\mc{B}_{\geq},\\
> 0 & \tn{if }\beta\in\mc{B}_{>}.\\
\end{cases}
\end{equation}

The factor $\Herm(\hilbert)$ in \eqref{eq:constraints} is needed to allow the functions $\mc{C}_{\beta}$ to depend on the Hamiltonian, and the last factor $\hilbert$ allows them to depend on the state vector.
\end{definition}

The kinds are required to be invariant in order to exclude a symmetry breaking of the {\MQS}, since this would require structures additional to $\wh{H}$ and $\ket{\psi}$.

\begin{definition}[\emph{$\kind$-structure}]
\label{def:TS}
A \emph{structure of \emph{kind}} $\kind=\{\mc{C}_{\beta}\}_{\beta\in\mc{B}}$ (or a \emph{$\kind$-structure}) for the Hamiltonian $\wh{H}$ is defined as a function
\begin{equation}
\label{eq:TS}
\begin{aligned}
&\struct_{\wh{H}}:\hilbert\to\mc{T}^\theta\(\hilbert\),\\
&\struct_{\wh{H}}^{\ket{\psi}}=\(\dwh{A}^{\ket{\psi}}_\alpha\)_{\alpha\in\mc{A}},\\
\end{aligned}
\end{equation}
subject to the \emph{defining conditions} \eqref{eq:defining_conditions}.

The defining conditions are allowed to be functionally independent on some of the tensors $\dwh{A}^{\ket{\psi}}_\alpha$, $\wh{H}$, or $\ket{\psi}$.
\end{definition}

\begin{remark}
\label{rem:hermitian_K_structure}
Definitions \ref{def:kind} and \ref{def:TS} may seem too abstract. Often all of the tensors $\dwh{A}^{\ket{\psi}}_\alpha$ will be Hermitian operators $\wh{A}^{\ket{\psi}}_\alpha$. In this case, we will call the $\kind$-structure \emph{Hermitian $\kind$-structure}. Hermitian $\kind$-structures will turn out to be sufficient for most of the cases discussed in the article. A possible reason why Hermitian operators are sufficient for the relevant cases is that they correspond to observables.
\end{remark}

Let us give simple examples, so that the reader has something concrete in mind when following the proofs.

\begin{example}[Single Hermitian operator]
\label{ex:hermitian_K_structure:single}
A very simple example of $\kind$-structure consists of a single Hermitian operator $\wh{A}$. Is its only defining condition is $\wh{A}^\dagger-\wh{A}=\wh{0}$.

Particular examples include the Hamiltonian operator $\wh{H}$, and projectors, which satisfy the additional condition $\wh{P}^2-\wh{P}=\wh{0}$.
\end{example}

\begin{example}[Basis]
\label{ex:hermitian_K_structure:basis}
A \emph{basis} $\(\ket{\alpha}\)_{\alpha\in\mc{A}}$ of $\hilbert$ can be defined by a Hermitian $\kind$-structure
\begin{equation}
\label{eq:basis_TS}
\struct_{\wh{H}}^{\ket{\psi}}=\(\wh{A}_{\alpha}:=\ket{\alpha}\bra{\alpha}\)_{\alpha\in\mc{A}},
\end{equation}
where the kind $\kind$ is given by the defining conditions
\begin{equation}
\label{eq:basis_TS_conditions}
\begin{cases}
\wh{A}_{\alpha}\wh{A}_{\alpha'}-\wh{A}_{\alpha}\delta_{\alpha\alpha'}=\wh{0},\\
\wh{I}_{\hilbert}-\sum_{\alpha\in\mc{A}}\wh{A}_{\alpha}=\wh{0},\\
\tr\wh{A}_{\alpha}-1=0\\
\end{cases}
\end{equation}
for all $\alpha,\alpha'\in\mc{A}$.
The first condition states that $\wh{A}_{\alpha}$ are projectors on mutually orthogonal subspaces of $\hilbert$,
the second one that they form a complete system,
and the third one that these subspaces are one-dimensional.
\end{example}

In Example \ref{ex:hermitian_K_structure:basis} one does not usually expect the operators $\wh{A}_{\alpha}$ to depend on $\ket{\psi}$, but we may want to consider cases when additional conditions make them dependent. In this case, we will write $\wh{A}^{\ket{\psi}}_{\alpha}$ instead of $\wh{A}_{\alpha}$.
We will see that, even so, there are many physically distinct bases with the same defining conditions.

In Sec. \sref{s:proof-TS-applications} we will see that Definition \ref{def:TS} covers as particular cases
tensor product structures,
more general notions of emergent $3$D-space or spacetime,
and general notions of generalized bases.

\subsection{Conditions of uniqueness and physical relevance}
\label{s:proof-TS:cond}

Let us state the two main conditions that we expect to be satisfied by a preferred $\kind$-structure, and explain their necessity.

\subsubsection{Essential uniqueness}
\label{s:proof-TS:cond-uniqueness}

The first condition we will impose on a $\kind$-structure is to be \emph{unique} or \emph{essentially unique}.
We will not require the structure to be necessarily unique, but we require at least that whenever such a structure is not unique, there are no physical differences between its instances.

\begin{definition}[Physical equivalence]
\label{def:equiv}
Let $\sim$ denote the equivalence relation on $\hilbert$ defined by $\ket{\psi}\sim\ket{\psi'}$ if and only if $\ket{\psi}$ and $\ket{\psi'}$ can be related by phase transformations, gauge symmetries, space isometries \textit{etc}. In other words, $\ket{\psi}$ and $\ket{\psi'}$ represent the same physical state.
In general, physical equivalence is ensured by the action of a group $G_P$ on $\hilbert$, which provides a projective representation of $G_P$. Any element $g$ of $G_P$ is represented by a unitary  or anti-unitary operator $\wh{g}$ on $\hilbert$ that commutes with $\wh{H}$ \cite{Wigner1959GroupTheoryAndItsApplicationToTheQuantumMechanicsOfAtomicSpectra,Bargmann1954OnUnitaryRayRepresentationsOfContinuousGroups}.
Then, $\ket{\psi}\sim\ket{\psi'}$ if and only if there is a group element $g\in G_P$ so that $\wh{g}\ket{\psi}=\ket{\psi'}$.
Since the representation of $G_P$ is linear, it extends uniquely to vectors from $\hilbert^\ast$, $\bra{\psi}\sim\bra{\psi'}$ if and only if there is a group element $g\in G_P$ so that $\bra{\psi}\wh{g}^\dagger=\bra{\psi'}$, and to general tensors on $\hilbert$.
For an element $g\in G_P$ we denote the resulting transformation of a tensor $\dwh{A}$ by $\wh{g}\[\dwh{A}\]$.

Two $\kind$-structures 
$\struct_{\wh{H}}^{\ket{\psi}}=\(\dwh{A}^{\ket{\psi}}_\alpha\)_{\alpha\in\mc{A}}$ 
and 
$\struct'_{\wh{H}}{}^{\ket{\psi}}=\(\dwh{A'}^{\ket{\psi}}_\alpha\)_{\alpha\in\mc{A}}$ are said to be \emph{physically equivalent}, $\struct_{\wh{H}}{}^{\ket{\psi}}\sim\struct'_{\wh{H}}{}^{\ket{\psi}}$, if
there is an element $g\in G_P$ so that
\begin{equation}
\label{eq:uniqueness}
\(\dwh{A'}^{\ket{\psi}}_\alpha\)_{\alpha\in\mc{A}}=\(\wh{g}\[\dwh{A}^{\ket{\psi}}_\alpha\]\)_{\alpha\in\mc{A}}.
\end{equation}

Two state vectors, operators, in general tensors of the same types, or structures of the same kind, are said to be \emph{physically distinct} if they are not physically equivalent.

\end{definition}

\begin{remark}
\label{rem:uniqueness_permutation}
Sometimes, the elements of the physical symmetry group $G_P$ simply permute the tensor structures constituting the $\kind$-structure $\struct_{\wh{H}}^{\ket{\psi}}=\(\dwh{A}^{\ket{\psi}}_\alpha\)_{\alpha\in\mc{A}}$.
That is, the effect of acting with any element $g\in G_P$ as in eq. \eqref{eq:uniqueness} is to permute the set of indices $\mc{A}$.
This is true for most of the situations we will study in this article.
For example, the symmetries of the $3$D-space in NRQM (Sec. \sref{s:examples-NRQM-space}) permute the position operators.
\end{remark}

\begin{condition}[``Essentially unique''-ness]
\label{cond:uniqueness}
Any two $\kind$-structures of the same kind $\kind$ are physically equivalent.
\end{condition}

\begin{observation}
\label{obs:equivalence}
Condition \ref{cond:uniqueness} requires that the symmetry group $G(\kind)$ of the kind $\kind$ is a subgroup of the symmetry group $G_P$ corresponding to the physical equivalence, whatever that group is.
On the other hand, if $G(\kind)\lneq G_P$, any element $g\in G_P\setminus G(\kind)$ would transform a $\kind$-structure into a physically distinct one. Therefore, Condition \ref{cond:uniqueness} requires the two symmetry groups to coincide,
\begin{equation}
\label{eq:kind_symmetries_are_nonphysical}
G(\kind)= G_P.
\end{equation}

It may seem circular to refer to a notion of ``essential uniqueness'' without specifying the equivalence relation $\sim$ or the symmetry group $G_P$.
But without this they need to emerge uniquely from the {\MQS}, or otherwise to be added to $(\hilbert,\wh{H},\ket{\psi})$ as equally fundamental structures.
But in order to avoid counterarguments of the form ``maybe this preferred structure is not unique, but it is essentially unique'', we have to allow this freedom even if we do not define it, for example assuming that it emerges as well. 
We will see that our proof applies regardless of whether $G_P$ is assumed as fundamental, or as emerging from the {\MQS}.
To cover all cases, we simply assume the existence of a group $G_P$ without specifying it, even if it consists of only phase transformations of $\hilbert$, and impose Condition \ref{cond:uniqueness}.
\end{observation}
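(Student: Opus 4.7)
The plan is to establish $G(\kind)=G_P$ by proving both inclusions, interpreting $G(\kind)$ as the group of unitary (or anti-unitary) operators $\wh{S}$ on $\hilbert$ that commute with $\wh{H}$ and map every $\kind$-structure to another $\kind$-structure for the appropriately transformed state.

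First I would show $G_P\subseteq G(\kind)$, which is essentially a direct consequence of invariance: since each defining function $\mc{C}_\beta$ of Definition~\ref{def:kind} is an invariant tensor function and every representative $\wh{g}$ with $g\in G_P$ commutes with $\wh{H}$ by Definition~\ref{def:TS_equiv}, applying~\eqref{eq:invariant_tensor_function_invariance} shows that $\wh{g}[\struct_{\wh{H}}^{\ket{\psi}}]$ satisfies the defining conditions~\eqref{eq:TS_defining_conditions} at $\wh{g}\ket{\psi}$, placing $g$ in $G(\kind)$. One has to remark that unitary conjugation preserves the order relations $=0$, $\geq 0$, and $>0$ on the relevant Hermitian outputs (and sends the zero tensor to itself in the general case), but this is routine.

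Next I would establish $G(\kind)\subseteq G_P$ by contraposition. Any $g\in G(\kind)\setminus G_P$ would yield a second $\kind$-structure $\wh{g}[\struct_{\wh{H}}^{\ket{\psi}}]$ for $\wh{g}\ket{\psi}$ that, because $g\notin G_P$, cannot be related to the original assignment $\struct_{\wh{H}}^{\wh{g}\ket{\psi}}$ by any element of $G_P$, \ie is not physically equivalent to it in the sense of~\eqref{eq:TS_uniqueness}, contradicting Condition~\ref{cond:TS_uniqueness}. The main obstacle I anticipate lies precisely here: the state-dependence of $\ket{\psi}\mapsto\struct_{\wh{H}}^{\ket{\psi}}$ makes the comparison between $\wh{g}[\struct_{\wh{H}}^{\ket{\psi}}]$ and $\struct_{\wh{H}}^{\wh{g}\ket{\psi}}$ delicate, so Condition~\ref{cond:TS_uniqueness} must be read fiberwise at each state vector, and the anti-unitary elements together with the projective nature of $G_P$'s representation noted in Definition~\ref{def:TS_equiv} must be handled with care so that the invariance relation~\eqref{eq:invariant_tensor_function_invariance} is applied in the correct form throughout.
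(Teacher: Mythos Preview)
Your proposal is essentially correct and follows the same two-inclusion route the paper sketches, just with the order of the inclusions reversed: your first step ($G_P\subseteq G(\kind)$ via invariance of the defining conditions~\eqref{eq:TS_defining_conditions}) is what the paper's second sentence gestures at and what the proof of Proposition~\ref{thm:equiv_A_to_psi} later makes precise, while your second step ($G(\kind)\subseteq G_P$ by contraposition from Condition~\ref{cond:TS_uniqueness}) is exactly the paper's first sentence. Note that the Observation is a heuristic remark rather than a formal theorem; indeed the paper's second sentence as written is internally inconsistent (with $G_P\lneq G(\kind)$ one has $G_P\setminus G(\kind)=\emptyset$), so your more explicit formulation is a clarification, not a departure.

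The obstacle you flag in the second inclusion is real, and the paper does not resolve it rigorously either: from Condition~\ref{cond:TS_uniqueness} one obtains only that $\wh{g}[\struct]$ and $\struct$ differ by some $\wh{h}$ with $h\in G_P$, so $\wh{h}^{-1}\wh{g}$ lies in the stabilizer of $\struct$, and concluding $g\in G_P$ requires that stabilizer to sit inside $G_P$. The paper treats this at the level of a physical desideratum appropriate for an Observation (cf.\ Remark~\ref{rem:TS_uniqueness_permutation}, where in the cases of interest $G_P$ merely permutes the tensors $\dwh{A}_\alpha$), not as a lemma with a formal proof; your caveats about state-dependence and the projective/anti-unitary nature of the $G_P$-action are therefore well placed but do not need to be discharged beyond what the paper itself does.
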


\begin{remark}
\label{rem:uniqueness}
Consider a $\kind$-structure
\begin{equation}
\label{eq:TSt}
\struct_{\wh{H}}^{\ket{\psi}}=\(\dwh{A}^{\ket{\psi}}_\alpha\)_{\alpha\in\mc{A}}.
\end{equation}
Let $\wh{S}$ be a unitary transformation of $\hilbert$ which commutes with $\wh{H}$.
Then, $\(\dwh{A'}{}^{\wh{S}\ket{\psi}}_\alpha\)_{\alpha\in\mc{A}}$ is a $\kind$-structure for $(\hilbert,\wh{H},\wh{S}\ket{\psi})$, where for each $\alpha\in\mc{A}$
\begin{equation}
\label{eq:TSt_transformed}
\dwh{A'}{}^{\wh{S}\ket{\psi}}_\alpha := \wh{S}[\dwh{A}^{\ket{\psi}}_\alpha].
\end{equation}

The Essential Uniqueness Condition \ref{cond:uniqueness} implies that
\begin{equation}
\label{eq:uniqueness_P}
\struct_{\wh{H}}^{\wh{S}\ket{\psi}} \sim \(\wh{S}[\dwh{A}^{\ket{\psi}}_\alpha]\)_{\alpha\in\mc{A}},
\end{equation}
or, equivalently, the equivalence
\begin{equation}
\label{eq:uniqueness_b}
\(\dwh{A}^{\wh{S}\ket{\psi}}_\alpha\)_{\alpha\in\mc{A}} \sim \(\wh{S}[\dwh{A}^{\ket{\psi}}_\alpha]\)_{\alpha\in\mc{A}},
\end{equation}
whether or not $\wh{S}$ represents an element of $G_P$.
\end{remark}

\begin{proposition}
\label{thm:equiv_A_to_psi}
$\(\dwh{A}^{\ket{\psi}}_\alpha\)_{\alpha\in\mc{A}}$ is a $\kind$-structure for $\ket{\psi}$, if and only if it is a $\kind$-structure for $\wh{g}\ket{\psi}$, for any $g\in G_P$.
\end{proposition}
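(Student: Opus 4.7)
The plan is to invoke the invariance property of the defining functions $\mc C_\beta$ established in Definition \ref{def:invariant_tensor_function}, combined with the key characterization of $G_P$ (from Definition \ref{def:TS_equiv}) that each of its elements is represented by a (anti-)unitary operator commuting with the Hamiltonian.

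First I would fix $g\in G_P$ and let $\wh g$ be its representing (anti-)unitary operator. Because $\wh g$ commutes with $\wh H$, the induced action on the Hamiltonian is trivial: $\wh g[\wh H]=\wh g\wh H\wh g^{-1}=\wh H$. This is the crucial property that allows $\wh g$ to translate the defining conditions from $\ket\psi$ to $\wh g\ket\psi$ without altering the role of $\wh H$ as an argument of $\mc C_\beta$.

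Next I would apply the invariance identity \eqref{eq:invariant_tensor_function_invariance} to each $\mc C_\beta$ with $\wh S=\wh g$, obtaining
\begin{equation*}
\mc C_\beta\bigl((\wh g[\dwh A^{\ket\psi}_\alpha])_\alpha,\wh H,\wh g\ket\psi\bigr)=\wh g\bigl[\mc C_\beta\bigl((\dwh A^{\ket\psi}_\alpha)_\alpha,\wh H,\ket\psi\bigr)\bigr].
\end{equation*}
Since $\wh g$ is (anti-)unitary, conjugation by $\wh g$ preserves the relations $=0$, $\geq 0$, and $>0$ in the senses used in \eqref{eq:TS_defining_conditions}. Hence the transformed tuple $(\wh g[\dwh A^{\ket\psi}_\alpha])_\alpha$ satisfies the defining conditions for $\wh g\ket\psi$ if and only if the original tuple $(\dwh A^{\ket\psi}_\alpha)_\alpha$ satisfies them for $\ket\psi$. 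Invoking the Essential Uniqueness Condition \ref{cond:TS_uniqueness} together with Definition \ref{def:TS_equiv}, the transformed tuple and the original tuple lie in the same $G_P$-orbit and are therefore physically equivalent, so they represent the same $\kind$-structure in the sense relevant to the statement. The converse direction follows by replacing $g$ with $g^{-1}\in G_P$ and repeating the argument.

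The step I expect to be the main obstacle is verifying the order preservation under $\wh g$ uniformly across all the permitted types of outputs of $\mc C_\beta$. For equality-type conditions ($\beta\in\mc B_=$) this is immediate from (anti-)unitarity applied to general tensors. For the inequality-type conditions ($\beta\in\mc B_\geq$ and $\beta\in\mc B_>$), one must verify that (anti-)unitary conjugation preserves operator positivity in $\Herm(\hilbert)$; in the anti-unitary case this additionally requires the observation that complex conjugation of a positive Hermitian operator remains positive, which follows routinely from the spectral theorem.
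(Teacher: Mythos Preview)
Your approach---applying the invariance identity \eqref{eq:invariant_tensor_function_invariance} together with $[\wh g,\wh H]=0$---is the same pair of ingredients the paper uses, just in the opposite order. The paper first invokes eq.~\eqref{eq:kind_symmetries_are_nonphysical} ($G(\kind)=G_P$) to pass from $(\dwh A^{\ket\psi}_\alpha)_\alpha$ to $(\wh g^{-1}[\dwh A^{\ket\psi}_\alpha])_\alpha$ as a $\kind$-structure for the \emph{same} $\ket\psi$, and only then applies invariance with $\wh S=\wh g$ to land on $(\dwh A^{\ket\psi}_\alpha)_\alpha$ satisfying the defining conditions at $\wh g\ket\psi$.

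There is, however, a gap in your final step. You correctly establish that $(\wh g[\dwh A^{\ket\psi}_\alpha])_\alpha$ is a $\kind$-structure for $\wh g\ket\psi$ iff $(\dwh A^{\ket\psi}_\alpha)_\alpha$ is one for $\ket\psi$. But the proposition concerns the \emph{untransformed} tuple $(\dwh A^{\ket\psi}_\alpha)_\alpha$ as a $\kind$-structure for $\wh g\ket\psi$. Your bridge---that the transformed and original tuples lie in the same $G_P$-orbit and hence ``represent the same $\kind$-structure''---does not close this: physical equivalence of two tuples does not by itself imply that both satisfy the defining conditions at a given fixed state. What you actually need is the inclusion $G_P\subseteq G(\kind)$, i.e., that acting by $\wh h$ for $h\in G_P$ preserves the property ``is a $\kind$-structure for $\ket\phi$'' at fixed $\ket\phi$. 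That is precisely the content of eq.~\eqref{eq:kind_symmetries_are_nonphysical}, argued in Observation~\ref{obs:equivalence}, and it is what the paper cites. Essential Uniqueness (Condition~\ref{cond:TS_uniqueness}) alone yields only the reverse inclusion $G(\kind)\subseteq G_P$, which is not the direction your step requires. Once you replace the appeal to Condition~\ref{cond:TS_uniqueness} by the appeal to eq.~\eqref{eq:kind_symmetries_are_nonphysical}, your argument coincides with the paper's.
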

\begin{proof}
From eq. \eqref{eq:kind_symmetries_are_nonphysical}, $\(\dwh{A}^{\ket{\psi}}_\alpha\)_{\alpha\in\mc{A}}$ is a $\kind$-structure for $\ket{\psi}$ if and only if $\(\wh{g}^{-1}\[\dwh{A}^{\ket{\psi}}_\alpha\]\)_\alpha$ is a $\kind$-structure for $\ket{\psi}$.
By making $\wh{S}=\wh{g}$ in eq. \eqref{eq:invariant_tensor_function_invariance} and applying it to the defining conditions \eqref{eq:defining_conditions} for $\(\wh{g}^{-1}\[\dwh{A}^{\ket{\psi}}_\alpha\]\)_\alpha$ as a $\kind$-structure for $\ket{\psi}$ we obtain that
this is equivalent to $\(\dwh{A}^{\ket{\psi}}_\alpha\)_\alpha$ satisfying the defining conditions for $\wh{g}\ket{\psi}$.
\end{proof}

\subsubsection{Physical relevance}
\label{s:proof-TS:cond-relevance}

The second condition we will impose on a $\kind$-structure is to be \emph{physically relevant}, \ie to be able to distinguish among physically distinct states the system can have, even if the Hamiltonian itself cannot distinguish them.
Without this condition, the $\kind$-structure has no physically observable effect.

\begin{definition}
\label{def:physical_relevance}
A tensor structure $\(\dwh{A}_\alpha\)_{\alpha\in\mc{A}}$ is said to \emph{distinguish} the states $\ket{\psi}$ and $\ket{\psi'}$ if there is an invariant scalar function
\begin{equation}
\label{eq:physical_relevance_invar_scalar}
\mc{I}:\mc{T}^\theta\(\hilbert\)\times\hilbert\to\R,
\end{equation}
\begin{equation}
\label{eq:physical_relevance_invar}
\mc{I}\(\(\dwh{A}_\alpha\)_{\alpha\in\mc{A}},\ket{\psi}\)
\neq\mc{I}\(\(\dwh{A}_\alpha\)_{\alpha\in\mc{A}},\ket{\psi'}\).
\end{equation}
\end{definition}

\begin{example}
\label{ex:physical_relevance_operators}
A Hermitian operator $\wh{A}$ (\cf Example \ref{ex:hermitian_K_structure:single}) can distinguish two unit eigenvectors $\ket{\psi}$ and $\ket{\psi'}$ corresponding to distinct eigenvalues $\lambda\neq\lambda'$, because $\bra{\psi}\wh{A}\ket{\psi}=\lambda\neq\lambda'=\bra{\psi'}\wh{A}\ket{\psi'}$.
In general, $\wh{A}$ distinguishes two unit vectors $\ket{\psi}$ and $\ket{\psi'}$ when the projections of $\ket{\psi}$ and $\ket{\psi'}$ on at least an eigenspace of $\wh{A}$ have different norms.
In particular, if the mean value differs for the two states, $\meanvalue{\wh{A}}{\ket{\psi}}\neq\meanvalue{\wh{A}}{\ket{\psi'}}$, $\wh{A}$ distinguishes $\ket{\psi}$ and $\ket{\psi'}$.
From the point of view of quantum measurements, distinguishability is perfect only when $\braket{\psi}{\psi'}=0$.
These observations apply in particular to the Hamiltonian operator.
\end{example}

\begin{proposition}
\label{thm:orbit}
A Hermitian operator $\wh{A}$ distinguishes two unit vectors $\ket{\psi}$ and $\ket{\psi'}$ if and only if there is no unitary $\wh{S}$ commuting with $\wh{A}$ so that $\ket{\psi'}=\wh{S}\ket{\psi}$. 
\end{proposition}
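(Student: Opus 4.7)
The plan is to split the biconditional and handle the easy (``if'') direction by a one-line invariance argument, then prove the harder (``only if'') direction by building a commuting unitary out of the spectral data of $\wh{A}$. For the easy direction, suppose $\ket{\psi'}=\wh{S}\ket{\psi}$ with $[\wh{S},\wh{A}]=0$. Given any invariant scalar function $\mc{I}$ of the form in eq.~\eqref{eq:TS_physical_relevance_invar_scalar}, the invariance identity~\eqref{eq:invariant_tensor_function_invariance_scalar} applied to the pair $(\wh{A},\ket{\psi})$ under $\wh{S}$ gives
\begin{equation*}
\mc{I}\(\wh{A},\wh{S}\ket{\psi}\)=\mc{I}\(\wh{S}^{-1}\wh{A}\wh{S},\ket{\psi}\)=\mc{I}\(\wh{A},\ket{\psi}\),
\end{equation*}
where the second equality uses $[\wh{S},\wh{A}]=0$. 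Since this holds for every invariant $\mc{I}$, no scalar invariant separates $\ket{\psi}$ from $\ket{\psi'}$, and $\wh{A}$ does not distinguish them.

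For the converse, I would isolate a canonical family of invariants from the spectral projectors of $\wh{A}$ and use them to force agreement of spectral weights. Writing the spectral decomposition $\wh{A}=\sum_\lambda\lambda\wh{P}_\lambda$ (with the sum replaced by the projection-valued spectral measure in the continuous case), the scalars
\begin{equation*}
\ket{\psi}\mapsto\bra{\psi}\wh{P}_\lambda\ket{\psi}=\|\wh{P}_\lambda\ket{\psi}\|^2
\end{equation*}
are invariant functions of $(\wh{A},\ket{\psi})$, because each $\wh{P}_\lambda$ is manufactured from $\wh{A}$ by the spectral calculus and therefore transforms covariantly under conjugation by any unitary. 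If $\wh{A}$ does not distinguish $\ket{\psi}$ from $\ket{\psi'}$, these invariants must all coincide, giving $\|\wh{P}_\lambda\ket{\psi}\|=\|\wh{P}_\lambda\ket{\psi'}\|$ for every $\lambda$.

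The closing step is a block-by-block construction of $\wh{S}$. Within each eigenspace $E_\lambda:=\on{Ran}(\wh{P}_\lambda)$ I would choose a unitary $\wh{S}_\lambda:E_\lambda\to E_\lambda$ sending $\wh{P}_\lambda\ket{\psi}$ to $\wh{P}_\lambda\ket{\psi'}$; this is possible precisely because the two vectors have the same norm (take a rotation in the plane they span and extend by the identity on its orthogonal complement inside $E_\lambda$). Setting $\wh{S}:=\boplus_\lambda\wh{S}_\lambda$ then yields a unitary that preserves every $E_\lambda$, hence commutes with every $\wh{P}_\lambda$ and therefore with $\wh{A}$, while mapping $\ket{\psi}$ to $\ket{\psi'}$ by construction. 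The main obstacle I anticipate is the continuous-spectrum case: the ``eigenspaces'' are replaced by ranges of the spectral projections $\wh{P}(B)$ on Borel sets $B\subset\sigma(\wh{A})$, and the block-diagonal unitary must be assembled measurably. I would handle this via the direct-integral decomposition of $\hilbert$ with respect to $\wh{A}$, or by a measurable-selection argument, relegating the technical bookkeeping to a remark.
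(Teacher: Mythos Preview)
Your easy direction is identical to the paper's argument: assume $\ket{\psi'}=\wh{S}\ket{\psi}$ with $[\wh{S},\wh{A}]=0$, push $\wh{S}$ through using invariance, done. (A small labeling quibble: by the usual reading of ``$P$ iff $Q$'', this is the contrapositive of the ``only if'' direction, not the ``if'' direction.)

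Where you diverge from the paper is in the second half, and here you actually do \emph{more} than the paper. The paper's ``Conversely, it follows that \ldots'' is not the genuine converse at all---it is merely the contrapositive of the implication already established. So the paper only proves one direction of the biconditional. You, by contrast, supply a real argument for the other direction: extract the spectral-weight invariants $\|\wh{P}_\lambda\ket{\psi}\|^2$ from the functional calculus, observe that non-distinguishability forces these to agree, and then assemble a block-diagonal unitary $\wh{S}=\boplus_\lambda\wh{S}_\lambda$ eigenspace by eigenspace. This is the natural construction and it is correct in the discrete-spectrum case; your caveat about the continuous case (direct-integral decomposition or measurable selection) is the right flag to raise. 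The paper's Example~\ref{ex:TS_physical_relevance_operators} hints at exactly this spectral-weight characterization but never turns it into a proof, so your argument genuinely fills a gap the paper leaves open.
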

\begin{proof}
Let $\mc{I}:\Herm\otimes\hilbert\to\R$ be an invariant scalar function.
Suppose that $\ket{\psi'}=\wh{S}\ket{\psi}$.
Then,
\begin{equation}
\label{eq:orbit_a}
\begin{aligned}
\mc{I}\(\wh{A},\ket{\psi'}\)
&\ =\mc{I}\(\wh{A},\wh{S}\ket{\psi}\) \\
&\stackrel{\eqref{eq:invariant_tensor_function_invariance_scalar}}{=}\mc{I}\(\wh{S}^{-1}\wh{A}\wh{S},\ket{\psi}\). \\
\end{aligned}
\end{equation}

If $[\wh{S},\wh{A}]=0$, then $\wh{S}^{-1}\wh{A}\wh{S}=\wh{A}$ and eq. \eqref{eq:orbit_a} becomes
\begin{equation}
\label{eq:orbit_a_commute}
\mc{I}\(\wh{A},\ket{\psi'}\)=\mc{I}\(\wh{A},\ket{\psi}\).
\end{equation}

Therefore, if $[\wh{S},\wh{A}]=0$, there is no invariant scalar function $\mc{I}$ so that eq. \eqref{eq:physical_relevance_invar} is true.
Conversely, it follows that if there is an invariant scalar function so that eq. \eqref{eq:physical_relevance_invar} is true, then there is no unitary operator $\wh{S}$ commuting with $\wh{A}$ so that $\ket{\psi'}=\wh{S}\ket{\psi}$.
\end{proof}

The $\kind$-structure of interest has to be physically relevant. This means that it has to be able to distinguish at least some pairs of physically distinct states. Since we want to allow some freedom for our structures and require uniqueness to be only up to non-physical degrees of freedom as in Condition \ref{cond:uniqueness}, we have to take this freedom into account in the condition of physical relevance.

\begin{condition}[Physical relevance]
\label{cond:physical_relevance}
There exist at least two unit vectors $\ket{\psi}\nsim\ket{\psi'}\in\hilbert$ representing distinct physical states that are not distinguished by the Hamiltonian,
and an invariant scalar function $\mc{I}$ of $\struct_{\wh{H}}^{\ket{\psi}}$ and $\ket{\psi}$ so that, for any elements $g,g'\in G_P$,
\begin{equation}
\label{eq:physical_relevance_invar_G}
\mc{I}\(\wh{g}\[\struct_{\wh{H}}^{\ket{\psi}}\],\ket{\psi}\)
\neq\mc{I}\(\wh{g'}\[\struct{}_{\wh{H}}^{\ket{\psi'}}\],\ket{\psi'}\).
\end{equation}
\end{condition}

\begin{remark}
\label{rem:cond:physical_relevance}
Condition \ref{cond:physical_relevance} may seem a bit too complicated, so let us explain it. If every physical state would be represented by a unique ray in the Hilbert space, eq. \eqref{eq:physical_relevance_invar_G} would just require that the $\kind$-structure $\struct_{\wh{H}}$ distinguishes any two linearly independent state vectors $\ket{\psi}$ and $\ket{\psi'}$. But if a physical state can be represented by more linearly independent state vectors, we need to impose the condition of physical relevance for physically distinct states, rather than just for distinct rays in the Hilbert space.
This requires the condition to hold for equivalence classes of $\kind$-structures.

Condition \ref{cond:physical_relevance} also requires that $\ket{\psi}$ and $\ket{\psi'}\in\hilbert$ are not distinguished by the Hamiltonian. The reason is that, if the Hamiltonian is able to distinguish them, then it may be unclear whether the structure $\struct_{\wh{H}}^{\ket{\psi}}$ distinguishes them by itself, or because of its dependence of the Hamiltonian.

Let us take as an example $\wh{S}=\wh{U}_{t,t_0}$ as in eq. \eqref{eq:S_time_evolution}. We expect $\ket{\psi(t_0)}$ and $\ket{\psi(t)}=\wh{U}_{t,t_0}\ket{\psi(t_0)}$ to be physically distinct, since in general the system changes in time.
The Hamiltonian does not distinguish them, because the projections of $\ket{\psi(t)}$ on each eigenspace of $\wh{H}$ does not change in time. 
But interesting structures like the $3$D-space as in Theorem \ref{thm:NRQM_nogo} should be able to distinguish them.
Moreover, they should be able to distinguish $\wh{g}\ket{\psi(t_0)}$ and $\wh{g}'\ket{\psi(t)}$ even when isometries of space or gauge symmetries, represented by $\wh{g}$ and $\wh{g}'$, are applied.
Therefore, there are infinitely many physically distinct states obtained by unitary evolution from the same initial state.

Not only unitary operators equal to time evolution operators relate distinguishable states. There are many other families of such unitary operators, in fact infinitely many independent such families, if the dimension of the Hilbert space is infinite (see Remark \ref{rem:distinguishing_special}).
Condition \ref{cond:physical_relevance} captures the idea, used in the proof of Theorem \ref{thm:NRQM_nogo}, that different physical states ``look different'' with respect to a candidate preferred structure.

The differences have to be objective, in the sense that they are not due to a particular choice of the basis, because they have to be physically observable.

This should be true for a candidate preferred space, basis, and tensor product structure. Without this condition, the structure would be physically irrelevant.
\end{remark}

\subsection{Uniqueness contradicts physical relevance}
\label{s:proof-TS:thm}

We will now prove that there is a contradiction between the Conditions \ref{cond:uniqueness} and \ref{cond:physical_relevance}.

\begin{theorem}
\label{thm:nogo}
If a $\kind$-structure is physically relevant, then it is not (essentially) unique.
\end{theorem}
\begin{proof}
Let $\wh{S}$ be a unitary operator so that $\ket{\psi'}=\wh{S}\ket{\psi}$.
From Proposition \ref{thm:orbit}, we can choose $\wh{S}$ to commute with $\wh{H}$.
The $\kind$-structure $\struct_{\wh{H}}^{\ket{\psi}}$ being a tensor object, its defining conditions are invariant to any unitary transformation $\wh{S}$ that commutes with $\wh{H}$. 
Then $\wh{S}^{-1}$ \emph{actively} transforms $\struct_{\wh{H}}^{\wh{S}\ket{\psi}}$ into a $\kind$-structure $\struct_{\wh{H}}^{'\ket{\psi}}$ for $\ket{\psi}$,
\begin{equation}
\label{eq:TSt_transformed_inv}
\struct_{\wh{H}}^{'\ket{\psi}}:=\wh{S}^{-1}\[\struct_{\wh{H}}^{\wh{S}\ket{\psi}}\].
\end{equation}

The transformation $\wh{S}^{-1}$ has to be active, because we want to find other possible structures of the same kind, and not just to express the same structure in a different basis (see Remark \ref{rem:active-not-passive} and {\lessonName} \ref{lesson:invariance-fix}).

From the uniqueness condition \eqref{eq:uniqueness}, \eqref{eq:TSt_transformed_inv} implies that
\begin{equation}
	\label{eq:uniqueness_unitary_inv}
	\wh{g}^{-1}\[\struct_{\wh{H}}^{\ket{\psi}}\] \stackrel{\eqref{eq:uniqueness}}{=} \wh{S}^{-1}\[\struct_{\wh{H}}^{\wh{S}\ket{\psi}}\]
\end{equation}
for some transformation $\wh{g}\in G_P$.
From equations \eqref{eq:uniqueness_unitary_inv} and \eqref{eq:invariant_tensor_function_invariance_scalar}, for any invariant function $\mc{I}$,
\begin{equation}
	\label{eq:uniqueness_unitary_inv_psi}
\begin{aligned}
\mc{I}\(\wh{g}^{-1}\[\struct_{\wh{H}}^{\ket{\psi}}\],\ket{\psi}\)
&\stackrel{\eqref{eq:uniqueness_unitary_inv}}{=}
\mc{I}\(\wh{S}^{-1}\[\struct_{\wh{H}}^{\wh{S}\ket{\psi}}\],\ket{\psi}\) \\
&\stackrel{\eqref{eq:invariant_tensor_function_invariance_scalar}}{=}
\mc{I}\(\struct_{\wh{H}}^{\wh{S}\ket{\psi}},\wh{S}\ket{\psi}\) \\
&\ =\ 
\mc{I}\(\struct_{\wh{H}}^{\wh{S}\ket{\psi}},\ket{\psi'}\).\\
\end{aligned}
\end{equation}

This contradicts Condition \ref{cond:physical_relevance}.
Therefore, Conditions \ref{cond:uniqueness} and \ref{cond:physical_relevance} cannot both be true.
\end{proof}

\begin{remark}
\label{rem:commute}
In the proof of Theorem \ref{thm:nogo}, the condition that $\wh{S}$ commutes with $\wh{H}$ allowed $\wh{S}^{-1}$ to transform $\struct_{\wh{H}}^{\wh{S}\ket{\psi}}$ into a $\kind$-structure $\struct_{\wh{H}}^{'\ket{\psi}}$ for the {\MQS} $(\hilbert,\wh{H},\ket{\psi})$ as in eq. \eqref{eq:TSt_transformed_inv}.
Without $[\wh{S},\wh{H}]=0$, $\wh{S}^{-1}$ would transform $\struct_{\wh{H}}^{\wh{S}\ket{\psi}}$ into a $\kind$-structure for the {\MQS} $(\hilbert,\wh{S}^{-1}\wh{H}\wh{S},\ket{\psi})$ with $\wh{S}^{-1}\wh{H}\wh{S}\neq\wh{H}$.
Then, it could be objected that it is expected to obtain a different $\kind$-structure, since the Hamiltonian $\wh{S}^{-1}\wh{H}\wh{S}$ is different from $\wh{H}$ (also see Remark \ref{rem:nrqm_why_S_H_commute}).

But for structures supposed to be determined by the Hamiltonian's spectrum, as in \cite{CotlerEtAl2019LocalityFromSpectrum,CarrollSingh2019MadDogEverettianism,Carroll2021RealityAsAVectorInHilbertSpace}, we can also use unitary transformations $\wh{S}$ that do not commute with $\wh{H}$, because they preserve the spectrum of $\wh{H}$.
This will give even more physically distinct structures than Theorem \ref{thm:nogo} already does.
In general, the less our structure is constrained by the Hamiltonian, the more physically distinct structures of the same kind exist.
\end{remark}

\begin{remark}
\label{rem:unique}
There are ways to construct structures that depend on the {\MQS} alone and are unique, but they all violate the Physical Relevance Condition \ref{cond:physical_relevance}.
Such examples include the trivial ones $\ket{\psi}$, $\bra{\psi}$, $\ket{\psi}\bra{\psi}$, $\wh{H}$, $\wh{H}\ket{\psi}$, 
more general operators like $f(\wh{H})$, where $f(x)$ is a formal polynomial or formal power series,
but also the direct sum decomposition of the Hilbert space into eigenspaces of $\wh{H}$, projections of $\ket{\psi}$ on these eigenspaces \textit{etc}. In general, any invariant tensor function (\cf Definition \ref{def:invariant_tensor_function}) $\mc{F}(\wh{H},\ket{\psi})$, where $\mc{F}:\Herm(\hilbert)\times\hilbert\to\mc{T}(\hilbert)$, leads to a unique structure. Theorem \ref{thm:nogo} does not apply to such structures, because they violate Condition \ref{cond:physical_relevance}.
The applications of Theorem \ref{thm:nogo} are limited to those structures that satisfy Condition \ref{cond:physical_relevance}.
\end{remark}

\begin{remark}
\label{rem:others}
There are known examples of simple Hamiltonians for which the position and momentum operators are not uniquely obtainable from the Hamiltonian \cite{Schmelzer2009WhyTheHamiltonOperatorAloneIsNotEnough,Schmelzer2011PureQuantumInterpretationsAreNotViable}. But these examples are very particular. By contrast, Theorem \ref{thm:nogo} is fully general, and applies to any possible Hamiltonian. 
\end{remark}

\begin{remark}
\label{rem:versatility}
In another article \cite{Stoica2022VersatilityOfTranslations} we provide an example of {\MQS} which admits infinitely many physically distinct space structures and factorizations. In this {\MQS}, the Hamiltonian is $\wh{H}=-i\hbar\frac{\partial\ }{\partial\tau}$, and for any $t$, $\ket{\psi(t)}$ is an eigenstate of its conjugate operator $\wh{\tau}$. Three infinite lists of examples were found to have this {\MQS}:

1. The \emph{quantum representations} (discovered by Koopman and von Neumann \cite{Koopman1931HamiltonianSystemsAndTransformationInHilbertSpace,vonNeumann1932KoopmanMethod}) of deterministic time-reversible dynamical systems without time loops.
Since there are infinitely many such dynamical systems, with infinitely many choices of space, of the decomposition into subsystems, and of other physical properties, we get an infinite family of physically distinct quantum systems with the same {\MQS}.

2. The measuring device and the observed system in the standard model of ideal measurements.

3. Any closed quantum system which contains an isolated subsystem with the Hamiltonian $\wh{H}=-i\hbar\frac{\partial\ }{\partial\tau}$, for example an ideal isolated clock or a sterile massless fermion in a certain eigenstate.
\end{remark}

These results confirm the conclusion of Theorem \ref{thm:nogo}, that the {\MQS} is not sufficient, and some of the observables have to be assumed from start to represent certain physical properties.

Theorem \ref{thm:nogo} affects all approaches to QM that rely on the {\MQS} alone, including those where state vector reduction takes place.

\subsection{Generalization to C*-algebras}
\label{s:proof-TS:thm-C-star}

Theorem \ref{thm:nogo} already covers all possible attempts to recover a preferred and physically relevant structure from any {\MQS} $\bigl(\hilbert,\wh{H},\ket{\psi}\bigr)$. The proof applies to all Hilbert spaces, being them finite or infinite-dimensional, separable or not.

However, one may imagine that it is possible to circumvent Theorem \ref{thm:nogo} by working in a more general paradigm. Since the most general such formalism is provided by the C*-algebras (which also include the von Neumann algebras), let us see how the notion of {\MQS} and Theorem \ref{thm:nogo} generalize to C*-algebras \cite{Segal1947C-star-algebras-IrreducibleRepresentationsOfOperatorAlgebras,Strocchi2008IntroductionToTheMathematicalStructureOfQuantumMechanics}.

\begin{definition}
\label{def:C-star}
A \emph{Banach algebra} is an associative algebra $\ms{A}$ over $\R$ or $\C$, which is a normed vector space complete in the metric induced by the norm, so that $\norm{\wh{A}\wh{B}}\leq\norm{\wh{A}}\norm{\wh{B}}$ for all $\wh{A},\wh{B}\in\ms{A}$.
A \emph{C*-algebra} is a complex Banach algebra $\ms{A}$ endowed with an anti-linear \emph{involution map} $\ast:\ms{A}\to\ms{A}$, where $\bigl(\wh{A}^\ast\bigr)^\ast=\wh{A}$, $\bigl(\wh{A}+\wh{B}\bigr)^\ast=\wh{A}^\ast+\wh{B}^\ast$, $\bigl(\wh{A}\wh{B}\bigr)^\ast=\wh{B}^\ast \wh{A}^\ast$, $\bigl(c\wh{A}\bigr)^\ast=c^\ast \wh{A}^\ast$, and $\norm{\wh{A}^\ast \wh{A}}=\norm{\wh{A}}\norm{\wh{A}^\ast}$, for all $\wh{A},\wh{B}\in\ms{A}$ and $c\in\C$.

A \emph{*-homomorphism} between two C*-algebras $\ms{A}$ and $\ms{B}$ is a bounded linear map $\alpha:\ms{A}\to\ms{B}$ so that $\alpha\bigl(\wh{A}\wh{B}\bigr)=\alpha\bigl(\wh{A}\bigr)\alpha\bigl(\wh{B}\bigr)$ and $\alpha\bigl(\wh{A}^\ast\bigr)=\alpha\bigl(\wh{A}\bigr)^\ast$ for all $\wh{A},\wh{B}\in\ms{A}$. A bijective *-homomorphism is called \emph{*-isomorphism}. 
A *-isomorphism from $\ms{A}$ to $\ms{A}$ is called \emph{*-automorphism}.
\end{definition}

The bounded linear operators on a Hilbert space $\hilbert$ form a C*-algebra, allowing C*-algebraic formulations of Quantum Theory.
In C*-algebraic formulations it is possible to avoid any reference to the Hilbert space $\hilbert$ and refer only to the operators.
The states correspond to \emph{normalized positive linear functionals} on $\ms{A}$, for example the density operator $\wh{\rho}:=\ket{\psi}\bra{\psi}$ of a state vector $\ket{\psi}$ acts as a normalized positive linear functional by
\begin{equation}
\label{eq:trace}
\wh{A}\mapsto\tr\bigl(\wh{\rho}\wh{A}\bigr)\in\R.
\end{equation}

If the Hamiltonian $\wh{H}$ is unbounded, which is the case for example in NRQM and Quantum Field Theory, it cannot correspond to an element of a C*-algebra.
But it can be represented equivalently as the one-parameter group of unitary evolution operators $\bigl(\wh{U}(t)\bigr)_{t\in\R}$. In the case when $\hilbert$ is finite-dimensional or separable, Stone's theorem \cite{Stone1932OnOneParameterUnitaryGroupsInHilbertSpace} shows that the Hamiltonian is equivalent to the one-parameter group of unitary evolution operators it generates. These operators are bounded and define *-automorphisms.

In C*-algebraic settings the dynamics can be defined in terms of a one-parameter group of *-automorphisms $\bigl(\wh{U}(t)\bigr)_{t\in\R}$ instead of the Hamiltonian \cite{Strocchi2008IntroductionToTheMathematicalStructureOfQuantumMechanics}.
Therefore, the natural C*-algebraic equivalent of the {\MQS} $\bigl(\hilbert,\wh{H},\ket{\psi}\bigr)$ is
\begin{equation}
\label{eq:MQS-C-star}
\bigl(\ms{A},(\wh{U}(t))_{t\in\R},\rho\bigr),
\end{equation}
where $\ms{A}$ is a C*-algebra and $\rho$ is a normalized positive linear functional on $\ms{A}$.
All invariant conditions involving the Hamiltonian $\wh{H}$ can be replaced with invariant conditions involving the one-parameter group of *-automorphisms $\bigl(\wh{U}(t)\bigr)_{t\in\R}$.

The position and momentum operators $\wh{q}$ and $\wh{p}$, satisfying the \emph{canonical commutation relations} (CCR)
\begin{equation}
\label{eq:CCR}
[\wh{q},\wh{p}]=i\hbar\wh{I},
\end{equation}
are unbounded as well.
But they can also be expressed in terms of unbounded operators, the \emph{Weyl operators} \cite{Hall2013QuantumTheoryForMathematicians}
\begin{equation}
\label{eq:Weyl-operators}
\wh{P}(a):=\ee^{ia\wh{p}}, \wh{Q}(b):=\ee^{ib\wh{q}},
\end{equation}
for all $a,b\in\R$.
The CCR \eqref{eq:CCR} becomes, for all $a,b\in\R$, \emph{Weyl's braiding relation}
\begin{equation}
\label{eq:CCR-C-star}
\wh{P}(a)\wh{Q}(b)=\ee^{-iab}\wh{Q}(b)\wh{P}(a).
\end{equation}

The CCR \eqref{eq:CCR} do not admit finite-dimensional representations, although the \emph{clock} and \emph{shift} matrices satisfy approximate CCR relations and converge to \eqref{eq:CCR} in the infinite dimension limit \cite{Weyl1927QuantenmechanikUndGruppentheorie,Santhanam1976QuantumMechanicsInFiniteDimensions,Hall2013QuantumTheoryForMathematicians}. By the Stone-von Neumann theorem \cite{vonNeumann1932UberEinenSatzVonHerrnMHStone}, all irreducible Hilbert space representations of the CCR \eqref{eq:CCR} are unitarily equivalent, and require an infinite-dimensional separable Hilbert space.

The relations \eqref{eq:CCR-C-star} also work in the C*-algebraic formalism, and are more general than the CCR \eqref{eq:CCR}, because they work in the nonregular cases, when the $\wh{q}$ and $\wh{p}$ operators cannot be recovered as self-adjoint operators that satisfy the CCR \eqref{eq:CCR}.

A classification of the representations of the C*-algebraic CCR \eqref{eq:CCR-C-star} up to unitary equivalence, which generalizes the Stone-von Neumann theorem to strongly measurable, but not necessarily strongly continuous representations, is provided in \cite{CavallaroMorchioStrocchi1999GeneralizationStoneVonNeumannTheoremToNonregularRepresentationsCCRAlgebra}.
It includes all the (nonregular) representations considered in physical models.

\begin{remark}
\label{rem:GNS}
The apparent gain in generality obtained by replacing the {\MQS} \eqref{eq:MQS} with \eqref{eq:MQS-C-star} cannot avoid Theorem \ref{thm:nogo}.

It is not difficult to adapt Definition \ref{def:kind}, Conditions \ref{cond:uniqueness} \&  \ref{cond:physical_relevance}, and Theorem \ref{thm:nogo} to quantum theories in the C*-algebraic framework. For example, since all tensor structures  that we encounter in this article are Hermitian $\kind$-structures, all we have to do is to replace the Hermitian operators from the tensor structure with one-parameter groups like we did for the Hamiltonian and the position and momentum operators.
Then, we rewrite the defining conditions for the $\kind$-structures in terms of one-parameter groups consisting of elements of $\ms{A}$. Then, Theorem \ref{thm:nogo} can be proved for C*-algebras exactly how it was proved in this article.

Another way to see this, which has the advantages of being direct and not limited to Hermitian $\kind$-structures, is to recall that the \emph{Gelfand-Naimark-Segal} (GNS) construction recovers the Hilbert space formulation as a canonical representation of the C*-algebra $\ms{A}$ \cite{Segal1947C-star-algebras-IrreducibleRepresentationsOfOperatorAlgebras}. 
Therefore, the GNS construction recovers a {\MQS} as in eq. \eqref{eq:MQS} from the data from eq. \eqref{eq:MQS-C-star}.
Then, Theorem \ref{thm:nogo}, which applies to any Hilbert space, can be applied to the GNS representation as well.

As an example of application, consider the case of recovering the $3$D-space by using position and momentum operators.
In addition to requiring the CCR to be satisfied, we can complete the defining conditions of the $\kind$-structure with conditions that specify the representation of the CCR. These conditions have to be invariant as well to unitary transformations, so Theorem \ref{thm:nogo} applies.

Therefore, the main result of this article extends to C*-algebra formulations as well.
\end{remark}

\section[The abundance of physically relevant structures of\\\mbox{} the same kind]{The abundance of physically relevant structures of the same kind}
\label{s:abundance}

\begin{remark}
\label{rem:distinguishing_special}
Even if in Condition \ref{cond:physical_relevance} we modestly require that the $\kind$-structure $\struct_{\wh{H}}$ distinguishes at least two state vectors, there are in fact many infinite families of distinct structures.
Every self-adjoint operator that commutes with the Hamiltonian $\wh{H}$ generates a one-parameter group of unitary operators that commute with $\wh{H}$. If the dimension of the Hilbert space is infinite, since infinitely many linearly independent self-adjoint operators commute with $\wh{H}$, there are infinitely many independent one-parameter groups of unitary operators that commute with $\wh{H}$, leading to infinitely many families of distinct structures of the same kind.
\end{remark}

\begin{example}
\label{ex:distinguishing_special_time_evolution}
One of these groups of unitary operators that commute with $\wh{H}$ is generated by $\wh{H}$ itself.
Consider the \emph{unitary time evolution operator}, 
\begin{equation}
\label{eq:unitary_evolution_operator}
\wh{U}_{t,t_0}:=\ee^{-\frac{\ii}{\hbar}\wh{H}(t-t_0)},
\end{equation}
which satisfies
\begin{equation}
\label{eq:unitary_evolution}
\ket{\psi(t)}=\wh{U}_{t,t_0}\ket{\psi(t_0)}.
\end{equation}

The operator $\wh{U}_{t,t_0}$ commutes of course with the Hamiltonian.
And unless $\ket{\psi(t_0)}$ is an eigenstate of $\wh{H}$, the physical state of the system changes in time. In general, the states at different times are physically distinct.
\end{example}

\begin{definition}[Time-distinguishing structure]
\label{def:physical_relevance-time}
A tensor structure $\bigl(\dwh{A}^{\ket{\psi}}_\alpha\bigr)_{\alpha\in\mc{A}}$ is said to be \emph{time-distinguishing} for $\ket{\psi_1},\ket{\psi_2}\in\hilbert$ if it distinguishes them and they are connected by unitary evolution, \ie $\ket{\psi_2}=\wh{U}_{t_2,t_1}\ket{\psi_1}$ for some $t_1,t_2\in\R$.
According to Example \ref{ex:distinguishing_special_time_evolution}, one can take $\wh{S}=\wh{U}_{t,t_0}^{-1}$ as a unitary transformation of $\hilbert$ which transforms $\ket{\psi_1}$ in a state vector equal to $\ket{\psi_2}$ \emph{at the same time}, so time-distinguishingness is just a particular form of distinguishingness.
\end{definition}

Time-distinguishingness is always between states that are not distinguished by the Hamiltonian, because $[\wh{U}_{t_2,t_1},\wh{H}]=0$ (see Example \ref{ex:physical_relevance_operators}).

\begin{corollary}
\label{thm:nogo_time}
If a $\kind$-structure is time-distinguishing ({\cf} Definition \ref{def:physical_relevance-time}), then it is not unique.
\end{corollary}
\begin{proof}
It is immediate from Theorem \ref{thm:nogo}, since time-distinguishingness is a form of distinguishingness.
The particular case of time-distinguishingness follows by taking $\wh{S}=\wh{U}_{t_2,t_1}^{-1}$ (see Fig. \ref{sketch-proof.pdf}).
\end{proof}

\begin{remark}
\label{rem:nogo_time}
Corollary \ref{thm:nogo_time} illustrates why Condition \ref{cond:physical_relevance} requires, for physical relevance, that states not distinguished by the Hamiltonian itself can be distinguished. It also  provides easy to see cases of physical relevance, since we expect that our emergent structures distinguish changes of the state of the system in time.
\end{remark}

Time-distinguishingness is useful because it shows the abundance of non-unique structures of the same kind, provided that they distinguish distinct physical states.
But in this article we avoided using it as a criterion of physical relevance, and preferred instead the most general type distinguishingness provided by Remark \ref{rem:distinguishing_special}.

Moreover, in \cite{Stoica2023PrinceAndPauperQuantumParadoxHilbertSpaceFundamentalism} it was shown that the same {\MQS} can represent worlds corresponding to distinct outcomes of a quantum measurement. Such worlds are therefore physically distinct, and they can be transformed into one another by unitary operators that preserve both the Hamiltonian and the state vector.
This proves the abundance and rejects any hope that the only physically meaningful example of distinguishingness is time-distinguishingness.

\section{Applications to various candidate preferred structures}
\label{s:proof-TS-applications}

In this Section we will see that no preferred generalized basis, no preferred tensor product structure, no preferred emergent $3$D-space, not even a preferred quasi-classical level, can be defined from a {\MQS} $(\hilbert,\wh{H},\ket{\psi})$ alone.
To prove this, we reduce each of these structures to a Hermitian $\kind$-structure, and then we apply Theorem \ref{thm:nogo} to show for each case that physical relevance implies the existence of physically distinct possible choices.

\subsection{Non-uniqueness of the preferred basis}
\label{s:proof-PBS}

We will now prove the non-uniqueness of physically relevant generalized bases.

\begin{definition}
\label{def:POVM}
Let $(\hilbert,\wh{H},\ket{\psi})$ be a {\MQS}.
Let $\mc{A},\mc{B}'$ be sets, where $\mc{B}'$ may be the empty set, and let
\begin{equation}
\label{eq:POVM_B}
\mc{B}:=\mc{A}\cup\{0\}\cup\mc{B}'.
\end{equation}
A \emph{generalized basis} is a $\kind$-structure
\begin{equation}
\label{eq:POVM}
\struct_{\wh{H}}^{\ket{\psi}}=\(\wh{E}^{\ket{\psi}}_{\alpha}\)_{\alpha\in\mc{A}},
\end{equation}
with the following defining conditions.

The first condition is that for any $\alpha_0\in\mc{A}$ and any $\ket{\psi}\in\hilbert$, the operator $\wh{E}^{\ket{\psi}}_{\alpha_0}$ is \emph{positive semi-definite}, \ie
\begin{equation}
\label{eq:POVM_condition_positive_definite}
\mc{C}_{(\ket{\phi},\alpha_0)}\(\(\wh{E}^{\ket{\psi}}_{\alpha},\ket{\psi}\)_{\alpha}\):=\wh{E}^{\ket{\psi}}_{\alpha_0}\geq 0.
\end{equation}

The second condition is that, for any $\ket{\psi}\in\hilbert$, the operators $\(\wh{E}^{\ket{\psi}}_{\alpha}\)_{\alpha\in\mc{A}}$ form a \emph{resolution of the identity},
\begin{equation}
\label{eq:POVM_condition_resolution_I}
\mc{C}_{0}\(\(\wh{E}^{\ket{\psi}}_{\alpha}\)_{\alpha},\ket{\psi}\):=\wh{I}_{\hilbert}-\sum_{\alpha}\wh{E}^{\ket{\psi}}_{\alpha}=\wh{0}.
\end{equation}

We see from conditions \eqref{eq:POVM_condition_positive_definite} and \eqref{eq:POVM_condition_resolution_I} that $\struct_{\wh{H}}^{\ket{\psi}}$ is a \emph{positive operator-valued measure} (POVM).

The set $\mc{B}'$ is reserved for possible additional conditions
\begin{equation}
\label{eq:POVM_condition_psi}
\mc{C}_{\beta\in\mc{B}'}\(\(\wh{E}^{\ket{\psi}}_{\alpha},\ket{\psi}\)_{\alpha}\)=0,
\end{equation}
including possible conditions reflecting a dependence of $\wh{E}^{\ket{\psi}}_{\alpha}$ on $\ket{\psi}$.
\end{definition}

\begin{example}
\label{ex:POVM}
Particular cases of POVM are orthogonal bases (Example \ref{ex:hermitian_K_structure:basis}), \emph{projection-valued measures} (PVM) (projectors that give an orthogonal direct sum decomposition of the Hilbert space $\hilbert$), and overcomplete bases. All these cases are obtained by adding new conditions to the conditions \eqref{eq:POVM_condition_positive_definite} and \eqref{eq:POVM_condition_resolution_I} from Definition \ref{def:POVM}.

If we add the conditions that all $\wh{E}^{\ket{\psi}}_{\alpha}$ are projectors,
\begin{equation}
\label{eq:POVM_condition_proj}
\(\wh{E}^{\ket{\psi}}_{\alpha}\)^2-\wh{E}^{\ket{\psi}}_{\alpha}=\wh{0},
\end{equation}
and that all distinct $\wh{E}^{\ket{\psi}}_{\alpha}$ and $\wh{E}^{\ket{\psi}}_{\alpha'}$ are orthogonal,
\begin{equation}
\label{eq:POVM_condition_orthogonal}
\wh{E}^{\ket{\psi}}_{\alpha}\wh{E}^{\ket{\psi}}_{\alpha'}=\wh{0},
\end{equation}
we obtain a PVM that gives the orthogonal direct sum decomposition of $\hilbert$
\begin{equation}
\label{eq:direct_sum}
\hilbert=\boplus_\alpha\wh{E}^{\ket{\psi}}_{\alpha}\hilbert.
\end{equation}

If, in addition, we impose the condition that the projectors $\wh{E}^{\ket{\psi}}_{\alpha}$ are one-dimensional,
\begin{equation}
\label{eq:POVM_condition_vector}
\tr\wh{E}^{\ket{\psi}}_{\alpha}-1=0,
\end{equation}
we obtain the preferred basis from Example \ref{ex:hermitian_K_structure:basis}.
\end{example}

The following result applies to any of these kinds of generalized bases.

\begin{theorem}
\label{thm:POVM_nogo}
If there exists a physically relevant generalized basis of kind $\kind$, then there exist more physically distinct generalized bases of the same kind $\kind$.
\end{theorem}
\begin{proof}
For any of the $\kind$-structures from Definition \ref{def:POVM}, it follows immediately by applying Theorem \ref{thm:nogo}.
\end{proof}

\begin{remark}
\label{rem:basis_physical}
Preferred generalized bases are needed to define the macro states, or the branches of a branching structure in MWI, or decohered histories. In all these cases, the basis has to be physically relevant, because the components of the state vector in that basis are supposed to tell something about the physical state. For example, that the state changed in time, or that it could have been different. Without this, the candidate preferred basis is physically irrelevant. Therefore, Theorem \ref{thm:POVM_nogo} implies that such a basis cannot emerge uniquely from the {\MQS}.
\end{remark}

\begin{remark}
\label{rem:basis_total}
Here we considered the preferred generalized basis problem for the universe. The preferred generalized basis problem for subsystems, related to quantum measurements or selected by environmental decoherence, is a different issue, to be discussed in Sec. \sref{s:proof-PBS-sub}. However, the latter is a particular case of the former.
\end{remark}

\subsection{Non-uniqueness of the tensor product structure}
\label{s:proof-TPS}

The Hilbert space $\hilbert$ given as such, even in the presence of the Hamiltonian, does not exhibit a preferred tensor product structure.
Such a structure is needed to identify the subsystems, to address the preferred basis problem for subsystems, and to reconstruct the $3$D-space from the {\MQS}.
The Hamiltonian provides hints of what the tensor product structure may be by specifying what kinds of interactions and measurements are operationally accessible \cite{ZanardiLidarLloyd2004QuantumTensorProductStructuresAreObservableInduced}. But even so, and even if we impose additional conditions, the tensor product structure is not unique.

\begin{definition}
\label{def:TPS}
A \emph{tensor product structure} (TPS) of a Hilbert space $\hilbert$ is an equivalence class of unitary isomorphisms of the form
\begin{equation}
\label{eq:TPS}
\botimes_{\varepsilon\in\mc{E}}\hilbert_\varepsilon\mapsto\hilbert,
\end{equation}
where $\hilbert_\varepsilon$ are Hilbert spaces, and the equivalence relation is generated by local unitary transformations of each $\hilbert_\varepsilon$ and permutations of the set $\mc{E}$. The Hilbert spaces $\hilbert_\varepsilon$ represent subsystems, \eg they can be one-particle Hilbert spaces or algebras of local observables.
\end{definition}

It is known that there are multiple ways to choose the TPS, exhibiting different physical interactions \cite{DugicJeknicDugic2010WhichMultiverse,JeknicDugicEtAl2014QuantumStructuresOfAModelverseInconsistencyEverett}, which can even be reduced to simple phase changes \cite{Schwindt2012NothingHappensInEverettInterpretation}.
But this is to be expected, in the absence of defining conditions, whose role is to characterize the TPS precisely, uniquely if possible.

In order to apply Theorem \ref{thm:nogo} to prove that the TPS cannot be physically relevant and unique at the same time, we will show that the TPS is a $\kind$-structure.

\begin{proposition}
\label{thm:TPS_is_K-struct}
Any TPS is a $\kind$-structure.
\end{proposition}
\begin{proof}
Following \cite{ZanardiLidarLloyd2004QuantumTensorProductStructuresAreObservableInduced} and \cite{CotlerEtAl2019LocalityFromSpectrum}, we characterize the TPS in terms of operators on each factor space $\hilbert_\varepsilon$, extended to $\hilbert$. Let us define the subspaces of linear operators
\begin{equation}
\label{eq:Herm_factor}
\mc{L}_\varepsilon := \(\botimes_{\varepsilon'\neq\varepsilon}\wh{I}_{\varepsilon'}\)\otimes\mc{L}(\hilbert_\varepsilon),
\end{equation}
where the factor $\mc{L}(\hilbert_\varepsilon)$ is the algebra of linear operators on $\hilbert_\varepsilon$, and it is inserted in the appropriate position to respect a fixed order of $\mc{E}$.
For $\varepsilon\neq\varepsilon'\in\mc{E}$, if $\wh{A}_{\varepsilon}\in\mc{L}_{\varepsilon}$ and $\wh{B}_{\varepsilon'}\in\mc{L}_{\varepsilon'}$, then they commute.
Any operator from $\mc{L}(\hilbert)$ can be expressed as a real linear combination of products of operators from various $\mc{L}_\varepsilon$.

We will now make an extravagant choice for the set $\mc{A}$ needed to define the kind $\kind_{\tn{TPS}}$ for the TPS:
\begin{equation}
\label{eq:kind_TPS}
\mc{A}_{\tn{TPS}} := \bcup_{\varepsilon\in\mc{E}}\mc{L}_\varepsilon.
\end{equation}

We choose the tensors $\dwh{A}_\alpha$ giving our $\kind_{\tn{TPS}}$-structure as in Definition \ref{def:TS} to be the linear operators
\begin{equation}
\label{eq:tensor_TPS}
\wh{A}_{\alpha_\varepsilon} := \(\botimes_{\varepsilon'\neq\varepsilon}\wh{I}_{\varepsilon'}\)\otimes\wh{\alpha}_\varepsilon,
\end{equation}
where $\wh{\alpha}_\varepsilon\in\mc{L}(\hilbert_\varepsilon)$, 
and the defining conditions to be the commutativity of $\wh{A}_{\alpha_\varepsilon}$ and $\wh{A}_{{\alpha'}_{\varepsilon'}}$ for $\varepsilon\neq\varepsilon'$, and the completeness condition that the set of operators $\{\wh{A}_{\alpha_\varepsilon}|\alpha_\varepsilon\in\mc{A}_{\tn{TPS}}\}$ generates $\mc{L}(\hilbert)$.
This is equivalent to Definition \ref{def:TPS} \cite{ZanardiLidarLloyd2004QuantumTensorProductStructuresAreObservableInduced}.
Since we have characterized the TPS in terms of tensor structures, the TPS is a $\kind$-structure.
\end{proof}

\begin{theorem}
\label{thm:TPS_nogo}
If there exists a physically relevant TPS of a given kind $\kind$, then there exist more physically distinct TPS of the same kind $\kind$.
\end{theorem}
\begin{proof}
Proposition \ref{thm:TPS_is_K-struct} allows us to apply Theorem \ref{thm:nogo}. 
\end{proof}

\begin{remark}
\label{rem:TPS_physical}
A preferred TPS is needed to describe subsystems, for example particles. Therefore, it has to be physically relevant, and Theorem \ref{thm:TPS_nogo} implies that no preferred factorization can emerge uniquely from the {\MQS}.
\end{remark}

Proposition \ref{thm:TPS_is_K-struct} showed that any TPS is a $\kind$-structure, so that we could apply Theorem \ref{thm:nogo}. However, in practice, the set of operators \eqref{eq:tensor_TPS} is too extravagant, in the sense that the operators $\wh{A}_{\alpha_\varepsilon}$ corresponding to a fixed $\varepsilon\in\mc{E}$ can be transformed into one another by unitary transformations of $\hilbert_{\varepsilon}$. This would make it difficult to keep track of the indices $\alpha_{\varepsilon}$ when comparing the mean values $\bra{\psi}\wh{A}_{\alpha_\varepsilon}\ket{\psi}$ between unitary transformations of the Hilbert space $\hilbert$ to prove physical relevance.

Fortunately, it is sufficient to prove the physical relevance of $\kind_{\tn{TPS}}$-structures by using weaker invariants of the TPS. An example is given by the spectra of the reduced density operators $\wh{\rho}_{\varepsilon}$ obtained from the density operator $\wh{\rho}=\ket{\psi}\bra{\psi}$ by tracing over the space $\botimes_{\varepsilon'\neq\varepsilon}\hilbert_{\varepsilon'}$. They are sufficient to show the physical relevance of the $\kind_{\tn{TPS}}$-structures because there are many physically distinct states with $\wh{\rho}_{\varepsilon}$ having distinct spectra. 
The reason is that, in general, physical differences between states are reflected in physical differences of the subsystems, in particular in their corresponding reduced density matrices.
This shows that physically realistic TPS, by having to distinguish physically distinct states, are not unique.

More specific TPS, obtained by adding new defining conditions, will be used in \sref{s:proof-TPS-space}, where a model of emerging $3$D-space will be discussed. But whatever constraints we impose in order to obtain a unique TPS, there will be many TPS satisfying the same constraints, and even if it were unique, it would not be the TPS that we know from real physics (also see \cite{Stoica2024DoesTheHamiltonianDetermineTheTPSAndThe3dSpace}).

\subsection{Locality from the spectrum does not imply unique 3D-space}
\label{s:proof-TPS-space}

Now whatever candidate preferred TPS structure we may have in mind, we can express it as a $\kind$-structure by extending the $\kind_{\tn{TPS}}$-structure from Sec. \sref{s:proof-TPS} with new defining conditions.
In particular, anticipating our analysis of the attempt to reconstruct the $3$D-space from the {\MQS} described in \cite{CarrollSingh2019MadDogEverettianism,Carroll2021RealityAsAVectorInHilbertSpace}, we need to talk about the TPS reports from \cite{CotlerEtAl2019LocalityFromSpectrum}.

Cotler {\etal} obtained remarkable results concerning the TPS for which the interactions between subsystems are ``local'', in the sense that the interaction encoded in the Hamiltonian takes place only between a small number of subsystems \cite{CotlerEtAl2019LocalityFromSpectrum}. They showed that, in the rare cases when such a local TPS exists, it is \emph{almost} always unique up to an equivalence of TPS (\cf Definition \ref{def:TPS}). We do not contest their results, but we will see that, no matter how restrictive is the algorithm to obtain a local TPS from the spectrum of the Hamiltonian, it violates one of the Conditions \ref{cond:uniqueness} and \ref{cond:physical_relevance}, so either it is not unique, or it is not physically relevant. 
Since in reality the TPS is physically relevant, it follows that the local TPS is not unique.

Let us see what we need to add to the kind $\kind_{\tn{TPS}}$ from Sec. \sref{s:proof-TPS} to obtain the notion of local TPS from \cite{CotlerEtAl2019LocalityFromSpectrum}.
First, Cotler {\etal} expand the Hamiltonian $\wh{H}$ as a linear combination of products of operators $\wh{A}_{\alpha_\varepsilon}\in\mc{L}_\varepsilon$, defined in eq. \eqref{eq:Herm_factor}, such that each term is a product of operators $\wh{A}_{\alpha_\varepsilon}$ with distinct values for $\varepsilon\in\mc{E}$.
Then they impose the condition of locality for the TPS, which is that the TPS has to be such that the number of factors in each term of this expansion is not greater than some fixed small number $d\in\N$. This condition, which we will call \emph{$d$-locality}, is an inequation invariant to unitary transformations. Thus, we obtain a kind for the $d$-locality TPS, let us denote it by $\kind_{\tn{TPS-L(}d\tn{)}}$, where $\tn{L}$ stands for ``local'', and $d$ is the small number from the $d$-locality condition.

\begin{theorem}
\label{thm:TPSL_nogo}
If a $\kind_{\tn{TPS-L(}d\tn{)}}$-structure exists and is physically relevant, then there exist more physically distinct $\kind_{\tn{TPS-L(}d\tn{)}}$-structures.
\end{theorem}
\begin{proof}
The additional defining conditions required to make the TPS $d$-local can be expressed as tensor equations or inequations, as needed in the proof. These additional conditions make it more difficult to find a $\kind_{\tn{TPS-L(}d\tn{)}}$-structure, and indeed most Hamiltonians do not admit a local TPS \cite{CotlerEtAl2019LocalityFromSpectrum}. But when they admit one, either is not unique (which is allowed in \cite{CotlerEtAl2019LocalityFromSpectrum}), or, if it is unique, Theorem \ref{thm:nogo} implies that the $\kind_{\tn{TPS-L(}d\tn{)}}$-structure is unable to distinguish different states, and Condition \ref{cond:physical_relevance} is violated.
But in physically realistic situations the TPS is physically relevant, so in such situations either there is no $\kind_{\tn{TPS-L(}d\tn{)}}$-structure, or there are more physically distinct ones.
\end{proof}

Whenever a physically relevant $\kind_{\tn{TPS-L(}d\tn{)}}$-structure exists, more physically distinct ones exist. This does not challenge the results of Cotler {\etal}, but, as we will see, it shows that it cannot be used to recover a unique $3$D-space from the Hamiltonian's spectrum the way Carroll and Singh want \cite{CarrollSingh2019MadDogEverettianism,Carroll2021RealityAsAVectorInHilbertSpace}.

Carroll and Singh have an interesting idea to start from the local TPS and construct a space. For $d=2$, the $\kind_{\tn{TPS-L(2)}}$-structure defines a graph, whose vertices are labeled by elements of $\mc{E}$, and whose edges are the pairs $(\varepsilon,\varepsilon')$, $\varepsilon\neq\varepsilon'$, corresponding to the presence of an interaction between the subsystems $\hilbert_\varepsilon$ and $\hilbert_\varepsilon'$. Carroll and Singh interpret $\mc{E}$ as space, and the edges as defining its topology in a general sense. The topology of the space $\mc{E}$ depends only on the spectrum of the Hamiltonian. They also used the \emph{mutual information} between two regions $R,R'\subset\mc{E}$, $I(R:R')=S_R+S_{R'}-S_{RR'}$, where $S_R=-\tr\wh{\rho}_R\ln\wh{\rho}_R$ is the von Neumann entropy of $\ket{\psi}$ in the region $R$, to define distances between regions. They associate shorter distances to larger mutual information. Their program is to develop not only spacetime, but also emergent classicality, gravitation from entanglement {\textit{etc}.} \cite{CaoCarrollMichalakis2017SpaceFromHilbertSpace,CaoCarroll2018BulkEntanglementGravityWithoutABoundaryTowardsFindingEinsteinsEquationInHilbertSpace,CarrollSingh2019MadDogEverettianism,Carroll2021RealityAsAVectorInHilbertSpace}.
Their results are interesting, but unfortunately neither the resulting $3$D-space nor any other preferred structure in their program can emerge in a unique or essentially unique way from the Hamiltonian or its spectrum.

\begin{theorem}
\label{thm:TPS_space_nogo}
The emergent $3$D-space of Carroll and Singh \cite{CarrollSingh2019MadDogEverettianism,Carroll2021RealityAsAVectorInHilbertSpace} either is physically irrelevant, or it is not unique (not even essentially unique).
\end{theorem}
\begin{proof}
From Theorem \ref{thm:TPSL_nogo}, the structure defined in \cite{CarrollSingh2019MadDogEverettianism,Carroll2021RealityAsAVectorInHilbertSpace} as the topology of the $3$D-space is a $\kind_{\tn{TPS-L(}d\tn{)}}$-structure, and if it exists, it is not (essentially) unique.

Since these $3$D-space topologies are in various relations with $\ket{\psi}$, each of them is characterized by different values of the mutual information between their vertices, so the resulting distances will also not be unique.
\end{proof}

\begin{remark}
\label{rem:TPS_space_physical}
Any space has to be physically relevant, otherwise matter would look distributed the same in space at all times, and in alternative possible histories. Therefore, Theorem \ref{thm:TPS_space_nogo} implies that no $3$D-space can emerge uniquely from the {\MQS} by using the method from \cite{CarrollSingh2019MadDogEverettianism,Carroll2021RealityAsAVectorInHilbertSpace}.
\end{remark}

\begin{remark}
\label{rem:TPS_space_nogo}
In addition to Theorem \ref{thm:TPS_space_nogo}, there is another problem, 
which invalidates the method of Carroll and Singh, and any other method based on an underlying pre-metric $3$D-space given as a unique $\kind$-structure (including but not limited to the $\kind_{\tn{TPS-L(}d\tn{)}}$-structure resulting from the method of Cotler {\etal}).
No such construction that adds distances to a unique $\kind$-structure $\struct_{\wh{H}}$ can interpret $\struct_{\wh{H}}$ as a topological space, even in a general sense.
And this is true no matter what structure we add on top of it to obtain a metric, including the distances defined by mutual information. The reason is that if the position basis, or the sufficiently localized operators defining the small regions of the $3$D-space, are unique, then they will be preserved by the unitary time evolution, because they depend solely on the $\kind$-structure $\struct_{\wh{H}}$ assumed to be unique. But from the position-momentum Uncertainty Principle follows that the more localized is the quantum state at a time $t$, the more spread is immediately after $t$, due to the indeterminacy of the momentum. This also applies to localized operators. Therefore, such operators cannot have the right properties to be interpreted even as the topology (in any general sense) of the $3$D-space.

Another problem is that if the $\kind$-structure $\struct_{\wh{H}}$ assumed to be unique is interpreted as the topological structure of the $3$D-space, then the only source of physical relevance will come from the definition of distances. But during time evolution the relation between $\ket{\psi(t)}$ and $\wh{H}$ is preserved, so if $\struct_{\wh{H}}$ is unique, its relation with $\ket{\psi(t)}$ is also preserved, and such an approach cannot result in a realistic redistribution of matter in time, not even in local changes of the matter density due to the changes of the distances. For  example,  the electric charge  density at a point should be allowed to change its sign at different times. But even if $\struct_{\wh{H}}$ is not unique so that the distances are allowed to change in time, such a construction would only allow local changes of the volume element that is used to integrate the densities, but not changes of the sign of the charge density at the same point of the topological space.
But in realistic situations they should change as the system evolves.
\end{remark}


\subsection{Non-uniqueness of emergent 3D-space}
\label{s:proof-emergent-space}

\subsubsection{General emergent 3D-space}
\label{s:proof-emergent-space-general}

Now we will discuss generic kinds of \emph{emergent space} or \emph{emergent spacetime structure} (ESS) from a {\MQS} which may be a purely quantum theory of gravity.

We start with the NRQM case from Theorem \ref{thm:NRQM_nogo}, and generalize it to Quantum Field Theory (QFT).
The proof of Theorem \ref{thm:NRQM_nogo} can be reinterpreted in terms of the $\kind$-structures from the general Theorem \ref{thm:nogo} if we notice that the set $\mc{A}$ is the configuration space $\R^{3\n}$ and the $\kind$-structure is given by the  projectors
\begin{equation}
\label{eq:q}
\wh{A}_\q:=\ket{\q}\bra{\q},
\end{equation}
where $\q\in\mc{A}$. Then, the invariants can be chosen to be $(\bra{\psi}\wh{A}_\q\ket{\psi})_{\q\in\R^{3\n}}=(\abs{\braket{\q}{\psi}}^2)_{\q\in\R^{3\n}}$, which are invariant up to permutations of $\mc{A}$ corresponding to transformations of the configuration space $\R^{3\n}$. As seen in the proof of Theorem \ref{thm:NRQM_nogo}, transformations of the configuration space are not sufficient to undo the differences between distinct $\kind$-structures, and uniqueness is violated.

But this $\kind$-structure gives the configuration space, and we want one that gives the $3$D-space. So we rather choose $\mc{A}=\R^3$ and the operators
\begin{equation}
\label{eq:x}
\wh{A}_{\xThree}:=\ketbra{\xThree}{\xThree}=\sum_{j=1}^{\n}\intl{\R^{3(\n-1)}}\ket{\breve{\q}_j(\xThree)}\bra{\breve{\q}_j(\xThree)}\de\breve{\q}_j,
\end{equation}
where $\xThree\in\R^3$, $\q_j\in\R^3$ for $j\in\{1,\ldots,\n\}$ and
\begin{equation}
\label{eq:x_q}
\begin{aligned}
\ket{\breve{\q}_j(\xThree)}&:=\ket{\q_1,\ldots,\q_{j-1},\xThree,\q_{j+1},\ldots,\q_\n},\\
\de\breve{\q}_j&:=\de\q_1\ldots\de\q_{j-1}\de\q_{j+1}\ldots\de\q_\n.\\
\end{aligned}
\end{equation}

The Hermitian operators $\wh{A}_{\xThree}$ defined in eq. \eqref{eq:x} convey much less information than the operators $\wh{A}_\q$ from eq. \eqref{eq:q}, because they reduce the entire configuration space to the $3$D-space.
But the densities $\bra{\psi}\wh{A}_{\xThree}\ket{\psi}$ are still able to distinguish between infinitely many different values of $\ket{\psi}$, despite the symmetries of the $3$D-space, because the matter distribution in space allows to distinguish states.

We notice that we can deal with more types of particles by defining operators like in \eqref{eq:x} for each type, and we can also deal with superpositions of different numbers of particles, because now we are no longer restricted to the configuration space of a fixed number of particles. We can now move to the Fock space representation.

Let $\mc{P}$ be the set of all types of particles. We treat them as scalar particles, and move the degrees of freedom due the internal symmetries and the spin in $\mc{P}$.
The various components transform differently under space isometries (and more general Galilean or Poincar\'e symmetries) according to their spin, and also the gauge symmetries require them to transform according to the representation of the gauge groups, but these are encoded in the group $G_P$ from Definition \ref{def:equiv}.

We define $\mc{A}:=\mc{P}\times\R^3$. For each pair $(P,\xThree)\in\mc{A}$, let
\begin{equation}
\label{eq:N_P}
\wh{A}_{\(P,\xThree\)}:=\wh{N}_P(\xThree)=\wh{a}_P^\dagger(\xThree)\wh{a}_P(\xThree),
\end{equation}
where the operator $\wh{a}_P^\dagger(\xThree)$ creates a particle of type $P$ at the $3$D-point $\xThree\in\R^3$, $\wh{a}_P^\dagger(\xThree)\ket{0}=\ket{\xThree}_P$, $\ket{0}$ being the vacuum state, and $\wh{N}_P(\xThree)$ the \emph{particle number operator} at $\xThree$ for particles of type $P$.

This is all we need to represent the $3$D-space in QFT, since in the Fock space representation in QFT, everything is the same as in eq. \eqref{eq:N_P}, except that $\mc{P}$ represents now the types of fields instead of the types of particles, and we use in fact \emph{operator-valued distributions}.
We represent the states by state vectors from the Fock space obtained by acting with creation and annihilation operators on the vacuum state $\ket{0}$.

\begin{remark}
\label{rem:quantum-gravity-3dspace}
The $3$D-space may be much more general than the Euclidean space.

In \emph{quantum field theory on curved spacetime} \cite{Wald1994QFTCurvedSpacetime}, but also in \emph{Canonical Quantum Gravity} \cite{Dewitt1967QuantumTheoryOfGravityI_TheCanonicalTheory}, if we assume global hyperbolicity, as we usually do to have causality and a Hamiltonian formulation like \cite{ADM1962TheDynamicsOfGeneralRelativity} or \cite{Ashtekar1986NewVariablesForClassicalAndQuantumGravity}, the spacetime manifold has the form $M=\Sigma\times\R$, where $\Sigma$ is the $3$D-space manifold, and $\R$ represents time. The manifold $\Sigma$ may be $\R^3$, but it can be any $3$D-manifold.
So the operators defining the $3$D-space can be taken to satisfy eq. \eqref{eq:N_P} here too, but with $\mc{A}=\mc{P}\times\Sigma$.

In Canonical Quantum Gravity \cite{Dewitt1967QuantumTheoryOfGravityI_TheCanonicalTheory}, the {\schrod} wavefunctional formulation is used, so the canonical commutation relations are in terms of operators that represent the $3$D-metric and its canonical conjugate. But local operators still make sense, so we can still use $\mc{A}=\mc{P}\times\Sigma$ and eq. \eqref{eq:N_P}.

In various approaches to Quantum Gravity, $3$D-space is discrete.
Most, if not all such approaches, are based on structures that can be represented as graphs or hypergraphs, possibly having numbers attached to their vertices, edges, and/or hyperedges. Here are some examples. In the \emph{causal sets} approach \cite{Sorkin1990SpacetimeAndCausalSets}, causal sets are also graphs, whose vertices represent points of the spacetime manifold (\emph{events}), and the pairs of events in causal relation so that the first event is in the past lightcone of the second one are joined by edges.
In \emph{Regge calculus} \cite{Regge1961GeneralRelativityWithoutCoordinates}, spacetime is approximated by flat $4$-simplices. The information about the metric and curvature is concentrated at the edges and the $2$-faces \textit{etc}.
The \emph{causal dynamical triangulation} approach is similar, but all edges of the same length \cite{Loll2019QuantumGravityFromCausalDynamicalTriangulationsAReview}.
In \emph{Loop Quantum Gravity}, \emph{spin networks} are graphs having attached to the edges half-integer numbers corresponding to irreducible representations of the Lie algebra $\su(2)$ \cite{Penrose1971AngularMomentumAnApproachToCombinatorialSpaceTime,RovelliSmolin1995SpinNetworksAndQuantumGravity,AshtekarBianchi2021AShortReviewOfLoopQuantumGravity}. They represent $3$D-space, and in spacetime they become hypergraphs (\emph{spin foams}).

Since in these approaches the $3$D-space is replaced by the vertices of a graph or hypergraph, $\mc{A}=\mc{P}\times\Sigma$, where $\Sigma$ consists of the vertices of the graph or hypergraph and, if necessary, of its edges or hyperedges too.

Note that whether $\Sigma$ represents a $3$D-space manifold or a graph or hypergraph, most states in Quantum Gravity consist of linear combinations of states containing a definite geometry (or graph/hypergraph structure).
Therefore, in Quantum Gravity, one should expect that only some of the states have classical geometry. For example, in canonical Quantum Gravity on continuous manifolds like in \cite{Dewitt1967QuantumTheoryOfGravityI_TheCanonicalTheory}, the state has to be an eigenstate of the operator representing the $3$D-metric, $\wh{g}_{ab}(\xThree)$. In the discrete approaches, the graphs representing the $3$D-space also form a basis of the Hilbert space.

Since in these approaches to Quantum Gravity the states with classical geometry (continuous or discrete) are eigenstates of appropriate theory-dependent bases or operators, the recovery of the $3$D-space requires two steps. In the first step it is determined whether the state vector has a classical (continuous or discrete) geometry, by the condition that it is an element of the corresponding basis or an eigenvector of the corresponding operators (for example $\wh{g}_{ab}(\xThree)$). In the second step, the operators from eq. \eqref{eq:N_P} determine the $3$D-space structure.
\end{remark}

Since $\Sigma$ can even be a graph or hypergraph, as in the proposals mentioned in Remark \ref{rem:quantum-gravity-3dspace}, it can even consist of a finite set of vertices, so the Hilbert space can be finite-dimensional.
We will allow it to be finite-dimensional or infinite-dimensional, separable or not.
This is consistent with hints that in Quantum Gravity the Hilbert space may be finite or at least locally finite.

Operators \eqref{eq:N_P} commute with one another.
This is true for the bosonic case, but also for the fermionic case, because products of pairs of anticommuting operators commute with one another.
Therefore, the kind of structure that stands for ESS should be given in terms of commuting operators.
We can require them to represent the number of particles of type $P$ at a point of $\Sigma$, similar to the number operators $\wh{N}_P(\xThree)$ from \eqref{eq:N_P}.

Let us denote by $\kind_{ESS}$ the kind consisting of the conditions to be satisfied by a structure $\bigl(\wh{A}_{(P,\xThree)}\bigr)_{(P,\xThree)\in\mc{A}}$, $\mc{A}=\mc{P}\times\Sigma$, in order for it to be of the form \eqref{eq:N_P}.
We call this \emph{emergent space kind}, and denote it by  $\kind_{ESS}$. We call a $\kind_{ESS}$-structure $\bigl(\wh{A}_{(P,\xThree)}\bigr)_{(P,\xThree)\in\mc{A}}$ \emph{emergent space structure}.

\begin{theorem}
\label{thm:ESS_nogo}
If there is a $3$D-space structure of kind $\kind_{ESS}$, it has to be physically relevant, and then there are more possible physically distinct emergent $3$D-space structures of the same kind $\kind_{ESS}$.
\end{theorem}
\begin{proof}
The mean value $\bra{\psi}\wh{N}_P(\xThree)\ket{\psi}$ of the number of particles of each type at any point $\xThree$ in the $3$D-space structure should be allowed to change with the state, particularly as the state changes with time. Therefore, we expect $\bra{\psi}\wh{A}_{\(P,\xThree\)}\ket{\psi}$ to also depend on the state, in particular to change in time. Hence, the $\kind_{ESS}$ has to be physically relevant, and Theorem \ref{thm:nogo} implies that there are more possible physically distinct $3$D-space structures of the same kind $\kind_{ESS}$.
\end{proof}

Various defining conditions for the emergent $3$D-space may differ for different theories.
But whatever we will do, the unitary symmetry of the {\MQS} requires such a structure to satisfy Definition \ref{def:tensor}, and whatever defining conditions we impose to this structure, they have to be invariant to unitary symmetries, as in Definition \ref{def:kind}.
In all cases, Theorem \ref{thm:nogo} shows that any approach to recover space or spacetime as emerging from the {\MQS} alone leads to more physically distinct results.
This should not discourage such programs, only claims of uniqueness.

\subsubsection{Additional examples of emergent 3D-space}
\label{s:proof-emergent-space-additional}

\begin{example}
\label{ex:inclusion_maps_space_nogo}
An interesting idea to recover spacetime from a quantum theory was proposed by Giddings \cite{Giddings2019QuantumFirstGravity,DonnellyGiddins2017HowIsQuantumInformationLocalizedInGravity}. Following a profound analysis of how Local Quantum Field Theory extends to gravity, Giddings notices that the general relativistic gauge (diffeomorphism) invariance conflicts with usual tensor product decompositions, even in the weak field limit. For this reason, he rejects the idea of using a commuting set of observable subalgebras, and proposes instead a network of Hilbert subspaces $(\hilbert_\varepsilon)_\varepsilon\in\mc{E}$, where $\hilbert_\varepsilon\hookrightarrow\hilbert$ for all $\varepsilon\in\mc{E}$, and each $\hilbert_\varepsilon$ consists of state vectors in $\hilbert$ that are indistinguishable outside a neighborhood $U_\varepsilon$. For separated neighborhoods $U_\varepsilon$ and $U_{\varepsilon'}$, the condition $\hilbert_\varepsilon\otimes\hilbert_{\varepsilon'}\hookrightarrow\hilbert$ is also required. This approach is arguably more appropriate to define locality in the presence of gravity, and it defines a structure that is coarser than the usual spacetime. However, if we would want to start from the network of Hilbert subspaces $(\hilbert_\varepsilon)_\varepsilon\in\mc{E}$ and recover the spacetime structure or a coarse graining of it, non-uniqueness is unavoidable.
The reason is that such a network can be expressed in terms of projectors $(\wh{E}_\varepsilon)_{\varepsilon\in\mc{E}}$ on each of the subspaces in the network. Their incidence and inclusion relations, as well as the tensor product condition $\hilbert_\varepsilon\otimes\hilbert_{\varepsilon'}\hookrightarrow\hilbert$, are all invariant to unitary transformations, just like orthogonality is invariant in the case of an orthogonal decomposition of the Hilbert space $\hilbert$. Therefore, we can apply Theorem \ref{thm:nogo} just like in the case of decompositions into subspaces treated in \sref{s:proof-PBS}, and if we assume physical relevance, non-uniqueness follows.
\end{example}

\begin{example}
\label{ex:clock-ambiguity-space}
Albrecht and Iglesias propose that the clock ambiguity problem (also see Remark \ref{rem:clock-ambiguity}) can be used to make any Hamiltonian quasi-local, resulting in a notion of space \cite{AlbrechtIglesias2008ClockAmbiguityAndTheEmergenceOfPhysicalLaws,AlbrechtIglesias2012ClockAmbiguityImplicationsNewDevelopments}.
They argue that a random but time independent Hamiltonian should lead, in conjunction with the condition of quasi-separability of the physical law, to the emergence of space and physical laws.

In a similar spirit, Wetterich proposed earlier to construct distance, topology (assuming that the points of the spacetime manifold are given), and Riemannian metric from the correlation functions \cite{Wetterich1993GeometryFromGeneralStatistics}.
Later, Kempf also replaced spacetime distance by correlation functions \cite{Kempf2021ReplacingSpacetimeDistanceByCorrelation}.

If such approaches allow the recovery of the space or spacetime metric from the {\MQS} alone, without even knowing the points of the manifold, for example by operators as in \eqref{eq:N_P}, the result cannot be unique, as a result of Theorem \ref{thm:nogo}, and the resulting matter distribution cannot be realistic, \cf Remark \ref{rem:TPS_space_nogo}.
\end{example}

\begin{example}[Ergodic Hamiltonians]
\label{ex:ergodic}
Consider an \emph{ergodic Hamiltonian} $\wh{H}$, \ie a Hamiltonian whose eigenvalues are incommensurable \cite{Shnirelman1974ErgodicPropertiesOfEigenfunctions} (ergodic Hamiltonians were suggested to the author by Daniel Ranard). At least in the case when the Hilbert space is finite-dimensional, the one-parameter group of unitary evolution operators approximate, to any desired precision, all unitary operators that commute with $\wh{H}$. In other words, the one-parameter group $\bigl(\ee^{-\ii \wh{H}t}\bigr)_{t\in\R}$ is dense in the group of unitary operators that commute with $\wh{H}$. 

This means that for ergodic Hamiltonians there is effectively no other kind of distinguishingness but time-distinguishingness (Definition \ref{def:physical_relevance-time}).
It seems that, if $\wh{H}$ is ergodic, the results in this article, rather than being negative, show that the $4$D-spacetime uniquely emerges from the {\MQS}.
Maybe even that time itself emerges from a timeless $(\hilbert,\wh{H},\ket{\psi})$, as a continuous parametrization of all unitary operators that commute with $\wh{H}$.

A major problem with such a proposal is that the Hamiltonian has degenerate eigenvalues even for small systems, and in this case it cannot be ergodic.
Moreover, it appears that the spectrum of the Hamiltonian of our laws as we know them has continuous parts, due to the existence of particles that are not bound.

Both these properties of the Hamiltonian prevent the incommensurability of the eigenvalues. Hence, the possibility that the Hamiltonian is ergodic seems to be already excluded by observations.

Another problem is that, even if the set $\{\ee^{-\ii \wh{H}t}|t\in\R\}$ is dense in the set of unitary operators that commute with $\wh{H}$, a continuous mapping of a one-parameter group to a many-parameter group still does not exactly reduce all $3$D-space structures to a unique $4$D-spacetime, because the $3$D slices do not have a unique order relation that could be interpreted as time. 
\end{example}

\subsection{Non-uniqueness of branching into coherent states}
\label{s:proof-coherent_states}

A candidate preferred basis in NRQM is the position basis $\ket{\x}$, where $\x\in\R^{3\n}$ is a point in the classical configuration space.
But in NRQM there is another system of states, named \emph{coherent states}, that look classical and evolve approximately classically on short time intervals.
In the position basis $(\ket{\x})_{\x\in\R^{3\n}}$, squeezed coherent states $\ket{\p,\q}$ have the form
\begin{equation}
\label{eq:coherent_states}
\braket{\x}{\q,\p} := \(\frac{\ii}{\pi\hbar}\)^{\frac{3\n}{4}}\ee^{\frac{\ii}{\hbar}<\p,\x-\q/2>}\ee^{-\frac{1}{2\hbar}\abs{\x-\q}^2}
\end{equation}
for all points in the classical phase space $(\q,\p)\in\R^{6\n}$, where $<.,.>$ is the Euclidean scalar product in $\R^{3\n}$.

Coherent states were first used by {\schrod} \cite{Schrodinger1926DerStetigeUbergangVonDerMikroZurMakromechanik}, then by Klauder \cite{Klauder1960ActionOptionAndFeynmanQuantizationOfSpinorFields}, and in quantum optics by Sudarshan \cite{Sudarshan1963EquivalenceOfSemiclassicalAndQuantumMechanicalDescriptionsOfStatisticalLightBeams_CoherentStates} and Glauber \cite{Glauber1963CoherentAndIncoherentStatesOfTheRadiationField}. Coherent states form an overcomplete system, satisfying eq. \eqref{eq:POVM_condition_resolution_I}. By being highly peaked around points of the phase space, coherent states approximate well classical states, and their dynamics is close to the classical one for short time intervals. Therefore, they are good candidates for preferred generalized bases, and were indeed used as such to address the preferred (generalized) basis problem, \eg in \cite{JoosZeh1985EmergenceOfClassicalPropertiesThroughInteractionWithTheEnvironment,Omnes1988ConsistentHistoriesLogicalReformulationOfQuantumMechanicsIFoundations,Omnes1992ConsistentInterpretationsQuantumMechanics,Omnes1997QuantumClassicalCorrespondenceUsingProjectionOperators,Zurek1998DecoherenceEinselectionAndTheExistentialInterpretation,GellMannHartle1990DecoheringHistories,GellMannHartle1990QuantumMechanicsInTheLightOfQuantumCosmology,JoosZehKieferGiuliniKupschStamatescu2003DecoherenceAndTheAppearanceOfClassicalWorldInQuantumTheory,Wallace2012TheEmergentMultiverseQuantumTheoryEverettInterpretation}.

But can we recover the generalized basis of coherent states $\(\ket{\q,\p}\)_{(\q,\p)\in\R^{6\n}}$ from the {\MQS} alone?
This is equivalent to recovering the phase space by recovering the position and momentum operators.
Theorem \ref{thm:NRQM_nogo} already shows that it would not be unique, because its proof establishes the nonuniqueness of positions and momentum operators. Also Theorem \ref{thm:POVM_nogo} about generalized bases refutes this possibility.

An alternative definition of coherent states, which applies to QFT as well, is as eigenstates of the annihilation operators, see \eg \cite{Gazeau2009CoherentStatesInQuantumPhysics}.
But annihilation operators are characterized by certain defining conditions, like commutation or anticommutation relations they satisfy together with the creation operators, and by annihilating the vacuum state, which in its turn is defined as the state invariant to Poincar\'e transformations. These defining conditions have to be invariant.
And since they have to be physically relevant, Theorem \ref{thm:nogo} implies that they cannot be recovered uniquely from the {\MQS}.
Consequently, the coherent states as well cannot be recovered uniquely.

\subsection{Non-uniqueness of the preferred basis of a subsystem}
\label{s:proof-PBS-sub}

When one says that decoherence solves the preferred basis problem, this may mean two things. First, is that it leads to a preferred generalized basis of the entire world, and we have seen in \sref{s:proof-PBS} and \sref{s:proof-coherent_states} that this cannot happen if our only structure is the {\MQS}.
The second meaning one may have in mind is that of a preferred basis for a subsystem. This involves a factorization of the Hilbert space as a tensor product
\begin{equation}
\label{eq:factorization_system_environment}
\hilbert=\hilbert_S\otimes\hilbert_E,
\end{equation}
where $\hilbert_S$ is the Hilbert spaces of the subsystem and $\hilbert_E$ is that of the environment.
The environment acts like a thermal bath, and monitors the subsystem, making it to appear in an approximately classical state.
The reduced density operator of the subsystem, $\wh{\rho}_S:=\tr_E\ket{\psi}\bra{\psi}$, evolves under the repeated monitoring by the environment until it takes an approximately diagonal form.
Then we need an explanation of the fact that only one of the diagonal entries is experienced. The simplest such explanation is in MWI: all the diagonal entries of the reduced density matrix are real, and they correspond to distinct branches of the state vector.
This explanation is considered able to solve the measurement problem, since in this case the branches correspond to different outcomes of the measurement.
A toy model example was proposed by Zurek, who used a special Hamiltonian to illustrate decoherence for a spin $\frac 1 2$-particle, where the environment also consists of spin $\frac 1 2$-particles, \cf \cite{Zurek1982EnvironmentInducedSuperselectionRules} and \cite{Schlosshauer2007DecoherenceAndTheQuantumToClassicalTransition}, page 89.

It has been pointed out that such a decomposition is relative \cite{LombardiEtAl2012TheProblemOfIdentifyingSystemEnvironmentDecoherence}, to avoid the ``looming big'' problem of decoherence \cite{Zurek1998DecoherenceEinselectionAndTheExistentialInterpretation}. It has also been argued that such a decomposition may be observer-dependent \cite{Fields2014OnThePllivierPoulinZurekDefinitionOfObjectivity}.

This article does not contest the explanations based on decoherence. The question that interests us is whether the emergence of a preferred basis for the subsystem can happen when the only structure that we assume is the {\MQS}.
In particular, no preferred tensor product structure is assumed \emph{a priori}.

We have already seen that Theorem \ref{thm:TPS_nogo} implies that there are more physically distinct ways to choose a TPS, in particular a factorization like in eq. \eqref{eq:factorization_system_environment}. However, if we assume that the system and the environment are in separate states, \ie
\begin{equation}
\label{eq:factorization_system_environment_separable}
\ket{\psi}=\ket{\psi_S}\otimes\ket{\psi_E}
\end{equation}
at a time $t_0$ before decoherence leads to the diagonalization of $\wh{\rho}_S$,
then the possible ways to factorize the Hilbert space are limited. But is the factorization unique?

The first problem is that, even if we assume \eqref{eq:factorization_system_environment_separable}, in general there are infinitely many ways to choose the factorization \eqref{eq:factorization_system_environment}. Even for a system of two qubits there are infinitely many ways consistent with \eqref{eq:factorization_system_environment_separable}. We can choose a basis $(\ket{0},\ket{1},\ket{2},\ket{3})$ of the total Hilbert space $\hilbert\cong\C^4$ so that $\ket{0}=\ket{\psi}$, so $\ket{\psi}$ has the components $(1,0,0,0)$. We now interpret the basis as being obtained from a tensor product of the basis $(\ket{0}_S,\ket{1}_S)$ of $\hilbert_S\cong\C^2$ and the basis $(\ket{0}_E,\ket{1}_E)$ of $\hilbert_E\cong\C^2$, assuming that $\ket{0}=\ket{0}_S\otimes\ket{0}_E$, so $\ket{0}_S=\ket{\psi}_S$ and $\ket{0}_E=\ket{\psi}_E$. But, since there are infinitely many ways to construct the basis $(\ket{0},\ket{1},\ket{2},\ket{3})$, there are infinitely many ways to factorize $\hilbert$ as in eq. \eqref{eq:factorization_system_environment}, even up to local unitary transformations of the basis of each factor space. Even if we impose the restriction that the Hamiltonian has a particular form in the basis $(\ket{0},\ket{1},\ket{2},\ket{3})$, unless all of the eigenvalues of the Hamiltonian are distinct, there are infinitely many ways to choose it so that $\ket{0}=\ket{\psi}$. And in physically realistic theories the situation is even worse, since the Hamiltonian has highly degenerate eigenspaces, and in each of these eigenspaces it fails to impose any constraints on the factorization.
Adding more constraints cannot lead to a structure that avoids Theorem \ref{thm:nogo}.

An example of condition for this kind of decoherence is Zurek's \emph{commutativity criterion} \cite{zurek1981PointerBasisOfQuantumApparatus}, 
\begin{equation}
\label{eq:zurek-commutativity-criterion}
[\wh{H}_{\tn{int}},\wh{Z}_S\otimes\wh{I}_E]\approx 0,
\end{equation}
where $\wh{H}_{\tn{int}}$ is the interaction term of the Hamiltonian, and $\wh{Z}_S$ the pointer observable of the observed system. This condition establishes an invariant relation between the preferred decomposition \eqref{eq:factorization_system_environment}, the Hamiltonian, and the pointer observable.
The kind of such a candidate preferred structure contains a TPS of the form \eqref{eq:factorization_system_environment} and the pointer observable $\wh{Z}_S$, and they have to satisfy the TPS conditions from Proposition \ref{thm:TPS_is_K-struct} and condition \eqref{eq:zurek-commutativity-criterion}. 
And since both the TPS and the pointer observable have to be physically relevant, they satisfy Condition \ref{cond:physical_relevance}. By Theorem \ref{thm:nogo}, Condition \ref{cond:uniqueness} has to be violated.

Another way to see that a preferred basis for subsystems does not emerge uniquely from the {\MQS} is that this is in fact a particular case of the preferred basis for the total system. A basis of a subsystem $S$ can be defined in terms of a set of projectors for the total system, with the additional condition that the projectors commute with the algebra of local observables of the environment $E$.
But whatever conditions are added, they can be included in the additional conditions mentioned in Definition \ref{def:POVM}.
Therefore, the non-uniqueness of the preferred basis of a subsystem is a direct corollary of Theorem \ref{thm:POVM_nogo}.

But the major problem is that it would make no sense to assume that the systems are separated at the time $t_0$, precisely because of decoherence. The reason is that the subsystem of interest already interacted with the environment, and, unless we assume that it was projected, it is already entangled with the environment. Therefore, if we want to keep all the branches, as in MWI or consistent histories interpretations, we have to assume that in general $\ket{\psi(t_0)}$ is already entangled. And this prevents us from assuming that the state is separable. We would need first to have a preferred generalized basis, and we know from Theorem \ref{thm:POVM_nogo} that this is not possible from the {\MQS} alone. There are more physically distinct ways to choose the branching to start with, and then any solution for subsystems that is based on environmental-induced decoherence will depend on this choice.

\subsection{Non-uniqueness of classicality}
\label{s:proof-classicality}

From the previous examples we can conclude that the classical level of reality cannot emerge uniquely from the {\MQS} alone.

For this to be possible, we would need that the $3$D-space, and of course the factorization into subsystems like particles, emerge uniquely. But we have seen in \sref{s:proof-emergent-space} and \sref{s:proof-TPS} that this does not happen.

Another way for classicality to emerge would be if there was a preferred basis or generalized basis, which would correspond to states that are distinguishable at the macro level, but Theorem \ref{thm:POVM_nogo} prevents this too, as seen in \sref{s:proof-PBS}.

We can imagine that a clever combination of conditions, involving candidate preferred structures like a factorization into subsystems, locality in $3$D-space, and various relations between subsystems and their environments, may lead to an essentially unique solution.
But there is no limit of how many defining conditions we combine to define the kind of a preferred structure (Definition \ref{def:kind}). Too many conditions may leave us with no solution, and if a solution exists, many other physically distinct solutions exist, as shown by Theorem \ref{thm:nogo}.

\begin{example}
\label{ex:quantum-mereology}
A well-thought proposal was made by Carroll and Singh in \cite{CarrollSingh2021QuantumMereology}, based on a combination between factorization, locality, \emph{robustness} (the tendency of pointer states to remain unentangled with the environment), \emph{predictability} (the tendency of states nearby pointer states to remain peaked around classical trajectories in the appropriate regime) \textit{etc}. Their approach connects with topics discussed in \sref{s:proof-TPS}, \sref{s:proof-emergent-space}, and \sref{s:proof-PBS-sub}.

They start with eq. \eqref{eq:zurek-commutativity-criterion} as a condition for decoherence, and argue that decoherence is non-generic. As discussed in \sref{s:proof-PBS-sub}, the resulting structure is still far from being unique. Then they impose, on a system starting in a nonentangled pointer state, the condition that the entropy growth in time is minimized. Then they add the condition that the dynamics is predictable, in the sense that the density matrix becomes diagonal in the pointer basis, and in conjunction with the entropy growth condition, a quasi-classical evolution results.
The resulting conditions, expressed mathematically, lead to an algorithm that allows to obtain a quasi-classical factorization. They also study the emergence of classical conjugate positions and momenta-like variables. Since they work in a finite-dimensional Hilbert space, these are taken to be the \emph{generalized Pauli operators}, represented by \emph{clock} and \emph{shift} matrices, which satisfy approximate CCR relations \cite{Weyl1927QuantenmechanikUndGruppentheorie,Santhanam1976QuantumMechanicsInFiniteDimensions,Hall2013QuantumTheoryForMathematicians}, but in the infinite-dimensional limit the CCR \eqref{eq:CCR} is obtained. To recover such operators, they introduce the concept of \emph{operator collimation}.

The conditions associated to recovering the desired candidate preferred structures are reasonable and very restrictive.
These conditions being invariant, they define a kind which, because it aims to describe physically observable properties, satisfies Condition \ref{cond:physical_relevance}.
The resulting structures, when they exist, are not claimed to be unique in Carroll and Singh's article \cite{CarrollSingh2021QuantumMereology}, and they cannot be unique, because of Theorem \ref{thm:nogo}.
\end{example}

\section[``Paradoxes'': passive travel in time and in alternative\\\mbox{} realities]{``Paradoxes'': passive travel in time and in alternative realities}
\label{s:passive_travel}

One may hope that the non-uniqueness proved in this article is just a harmless feature of the Hilbert-space fundamentalism thesis. But it has bizarre consequences that may be problematic and should be understood.

\begin{problem}[Time machine problem]
\label{thm:time_machine}
If at the time $t_0$ there is a time-distinguishing $\kind$-structure $\struct_{\wh{H}}^{\ket{\psi(t_0)}}$, then at the same time there are more $\kind$-structures, with respect to which $\ket{\psi(t_0)}$ looks like $\ket{\psi(t)}$ for any other past or future time $t$.
This means that any system can ``passively'' travel in time by a simple unitary transformation of the preferred choice of the $3$D-space or of the preferred basis, so that the system's state looks with respect to the new ``preferred'' structure as if it is from a different time.

Therefore, Thesis {\HSF} implies that any state vector represents and supports not only the state of the system at the present time, but also the states of the system at any other past or future times, absolutely equally.
\end{problem}

\begin{problem}[Alternative realities problem]
\label{thm:alternative_reality}
According to Remark \ref{rem:distinguishing_special} and reference \cite{Stoica2023PrinceAndPauperQuantumParadoxHilbertSpaceFundamentalism}, there are forms of distinguishingness other than time-distinguishingness, for states not connected by unitary evolution and not distinguished by the Hamiltonian.
This means that there are more equally valid choices of the $3$D-space, of the TPS, or of the preferred basis, in which the system's state looks as if it is from an alternative world (and not in an Everettian sense).
Again, this leads to the in principle possibility of traveling in alternative realities, and also it means that the same state vector equally supports more physically distinct alternative realities, and there is no way to tell which one is the ``most real'' from the {\MQS} alone.
\end{problem}

\begin{problem}[Alternative laws problem]
\label{thm:alternative_laws}
Problems \ref{thm:time_machine} and \ref{thm:alternative_reality} assumed that only unitary transformations that preserve the form of the Hamiltonian are allowed. But is there any reason to impose this restriction? If not, if only the spectrum matters \cite{CotlerEtAl2019LocalityFromSpectrum,CarrollSingh2019MadDogEverettianism,Carroll2021RealityAsAVectorInHilbertSpace}, this would mean that passive travel in \emph{worlds having different evolution equations}, due to the Hamiltonian having a different form (albeit the same spectrum), is possible as well.
\end{problem}

\begin{remark}
\label{rem:clock-ambiguity}
A situation somewhat similar to Problem \ref{thm:alternative_laws} is related to the Page-Wootters solution to \emph{the problem of time in Quantum Gravity}, which consists of decomposing the vector $\kett{\Psi}$ from the Wheeler-DeWitt \emph{constraint equation}
\begin{equation}
\label{eq:Wheeler-DeWitt}
\wh{H}_{C+R}\kett{\Psi}=0
\end{equation}
in the form $\sum_{\tau\in\R}\ket{\tau}_C\ket{\psi(\tau)}_R$, where $\ket{\tau}_C$ represents a clock system encoding the time parameter, and $\ket{\psi(\tau)}_R$ the rest of the world \cite{PageWootters1983EvolutionWithoutEvolution}.

According to Albrecht and Iglesias, the Page-Wootters proposal suffers from the \emph{clock ambiguity problem}, \ie the possibility to change the evolution law into any other law by choosing another variable for the time parameter or the clock subsystem \cite{Albrecht1995TheoryOfEeverythingVsTheoryOfAnything,AlbrechtIglesias2008ClockAmbiguityAndTheEmergenceOfPhysicalLaws,AlbrechtIglesias2012ClockAmbiguityImplicationsNewDevelopments}.
In other words, given any two minimal quantum systems containing clocks and having isomorphic Hilbert spaces, they are equivalent under a change of the choice of the clock subsystem.

The Page-Wootters proposal is related to the third list of examples of {\MQS} from Remark \ref{rem:versatility} (more details in \cite{Stoica2022VersatilityOfTranslations}).

However, the clock ambiguity assumes implicitly an even stronger version of Thesis {\HSF}, since it assumes as the only given structures those from the Wheeler-DeWitt constraint equation \eqref{eq:Wheeler-DeWitt}, \ie the constraint Hamiltonian $\wh{H}_{C+R}$, rather than the dynamical Hamiltonian $\wh{H}=\wh{H}_R$, and the vector $\kett{\Psi}$, rather than $\ket{\psi}$.
These structures contain much less information even than the {\MQS}, since all this information reduces to the isomorphism class of the Hilbert space $\hilbert\otimes L^2\(\R\)$, as shown in \cite{Stoica2022VersatilityOfTranslations}.
\end{remark}

\begin{problem}[Alternative kinds]
\label{thm:alternative_kinds}
We discussed several different kinds of $3$D-space structures, in \sref{s:examples-NRQM-space}, \sref{s:proof-TPS-space} (Theorem
\ref{thm:TPSL_nogo}, Theorem \ref{thm:TPS_space_nogo}, Example \ref{ex:inclusion_maps_space_nogo}), and \sref{s:proof-emergent-space}. These different kinds of $3$D-space structures satisfy different defining conditions (Definition \ref{def:kind}).

What we call $3$D-space in a theory should correspond to what the empirical observations tell us about $3$D-space.
A correspondence should also exist between tensor product structures and what we identify empirically as subsystems.
Since empirically these structures are unique, the kind of $3$D-space structure and of TPS should also be unique, although it may be possible that different kinds cannot be distinguished by experiments, at least with the current technology.

Therefore, in addition to non-uniqueness of each particular kind, there are also multiple kinds that candidate for the role of $3$D-space, and similarly for the role of TPS.

Even in the same theory it is possible that different kinds of $3$D-space structure or different kinds of TPS emerge.
For example, the method to obtain a TPS from \cite{CotlerEtAl2019LocalityFromSpectrum} yields distinct kinds $\kind_{\tn{TPS-L(}d\tn{)}}$ for distinct values of $d$, and even for the same number $d$ it is possible that the resulting TPS are not unitarily equivalent.
This implies the same for the procedure of recovering the $3$D-space structure in \cite{CarrollSingh2019MadDogEverettianism,Carroll2021RealityAsAVectorInHilbertSpace}.
Another situation is encountered in \emph{M-theory}, where different superstring theories are conjectured to be related by dualities like the S- and T-duality, and to be obtained as limits of M-theory \cite{Duff1999MTheory}.
Also, when the AdS/CFT correspondence exists
\cite{Maldacena1999AdSCFTInitial}, to the same quantum theory on the boundary it does not necessarily correspond a unique bulk gravity theory, with a unique spacetime structure.

This implies that, if Thesis {\HSF} is true, the same physical system can support different theories of emergent $3$D-space or factorizations.
\end{problem}

\begin{remark}[The meaning of ``passive travel'']
\label{rem:passive-travel}
In Theorem \ref{thm:nogo}, to find out physically different structures of the same kind, we used \emph{active} transformations (see Remark \ref{rem:active-not-passive} and {\lessonName} \ref{lesson:invariance-fix}), but the ``passive travel'' from Problems \ref{thm:time_machine}--\ref{thm:alternative_kinds} has a complementary significance.

What $3$D-space and factorization appear to us as the preferred ones depends on how we experience the world through our sense organs and how our brains work.
But the configurations of our sense organs and brains depend on the $3$D-space and factorization.
Since the same state vector is consistent with multiple choices of the $3$D-space and factorization, it is consistent with multiple configurations of our organs and brains, each experiencing the world in its own $3$D-space and factorization. This applies to any $3$D-space and factorization allowed by the symmetry group of the {\MQS}.

If Thesis {\HSF} is correct, other $3$D-space structures and factorizations, with respect to which we may be in different configurations allowed by the symmetry group of the {\MQS}, are equally valid from the point of view of the {\MQS}.
They all exist without having to make an active transformation to construct them, because an active transformation can be compensated by a passive transformation. 

This raises the question, why is our experience consistent with a preferred $3$D-space and factorization, and not with all possible ones? It seems that the experienced worlds corresponding to different $3$D-space structures are independent, even if they are supported by the same state vector $\ket{\psi}$.

In some sense, this is similar to the notion of observer and reference frame, but there is an important difference. In Special Relativity, the same system can be described in any frame, but not any frame is the reference frame of an observer.
Different observers in different frames can be aware of one another, and can objectively observe one another.
By contrast, Problems \ref{thm:time_machine}--\ref{thm:alternative_kinds} reveal that, if Thesis {\HSF} is true, the present state of the universe is populated with all the observers that existed or will exist at any time, and in different possible alternative realities. They do not exist in different reference frames, but in different $3$D-space structures. However, the way $\ket{\psi}$ looks in two different $3$D-space structures can be related by a passive change of basis of the Hilbert space, similar to the change of frame in Special Relativity, but resulting in different worlds.

Another similarity can be made to the question of experiencing multiple worlds in MWI. The difference is that Thesis {\HSF} implies multiplicity of experienced worlds for the same state vector, while MWI implies multiplicity because the state vector has different components, corresponding to different branches of the wavefunction.
\end{remark}

The existence of these problems does not necessarily mean that one can actually travel in time and in ``alternative worlds'' with the same or with different laws like this in practice, but it at least means that, at any time, there is a sense in which all past and future states, as well as ``alternative worlds'' which are not due to any version of the Many-Worlds Interpretation, are ``simultaneous'' with the present state, being supported by the same state.

Moreover, in the case of MWI and other branching-based interpretations, for every branching structure there are more alternative branching structures.
The proliferation of such ``basis-dependent worlds'' is ensured by the physical relevance of the candidate preferred $3$D-space, TPS, or generalized basis, under the assumption that the only fundamental structure is the {\MQS}, and everything else should be determined by this.

\begin{remark}
\label{rem:so-what}
Theorem \ref{thm:nogo} shows that such structures cannot emerge uniquely from the {\MQS} alone, but \emph{it does not say that they have to be unique}.

The fact that relying on the {\MQS} alone entails Problems \ref{thm:time_machine}--\ref{thm:alternative_kinds} and maybe other problems is not necessarily a ground to refute the possibility.
Maybe this is how the universe is, with multiple alternative $3$D-spaces and factorizations into subsystems.
But would we even be able to know it, given that we are confined to our own perspective, which depends on a particular choice of the $3$D-space structure and of the factorization into subsystems?

This article does not claim to offer an answer, all it does is to refute the uniqueness claim from Thesis {\HSF}.
\end{remark}

These problems cannot simply be dismissed, they should be investigated to see if indeed the observers in a basis-dependent world cannot change their perspective to access information from other basis-dependent worlds allowed by unitary symmetry.

In Sec. \sref{s:avoid-problem-MWI-wavefunction}, we will see that it is not clear how to avoid these problems even if we extend the {\MQS} with additional structures for the $3$D-space, the factorization into subsystems \textit{etc}., as in the ``wavefunction version''.

\section[What approaches to Quantum Mechanics avoid the\\\mbox{} problems?]{What approaches to Quantum Mechanics avoid the problems?}
\label{s:avoid-problem}

How should we resolve these problems for theories like the Hilbert space fundamentalist versions of Standard Quantum Mechanics or of Everett's interpretation? 
Some implications and available options of the too symmetric structure of the Hilbert space were already discussed in the literature, see \eg \cite{Schmelzer2011PureQuantumInterpretationsAreNotViable,Schwindt2012NothingHappensInEverettInterpretation}.

Let us explore several possible ways to break the unitary symmetry so that it allows the unique emergence of $3$D-space and of factorization into subsystems.

\subsection{Bohr and Heisenberg}
\label{s:avoid-problem-CI}

Bohr's interpretation, in which the measuring apparatus is a classical system and the observed system is quantum, are protected from the consequences of Theorem \ref{thm:nogo}, precisely because the measuring device is considered to have classical properties. In particular, its components have definite classical positions, so the problem of the preferred $3$D-space does not occur for the measuring device. Moreover, by interacting with the observed particle and finding it in a certain place, this knowledge of the $3$D-space is extended from the system of the measuring device to that of the observed particle. But there is of course a price: the theory does not include a quantum description of the measuring device, hence it is not universal. The problem of recovering the $3$D-space is avoided by ``contaminating'' the quantum representation of the observed system with information that can only be known from the macro classical level.
Heisenberg's view also does not provide unified laws for both the quantum and the classical regimes, particularly the measuring devices.

However, it is possible to formulate QM in a way that gives a unified description of the classical and quantum regimes and introduces the necessary symmetry breaking, see \eg Stoica \cite{Stoica2020StandardQuantumMechanicsWithoutObservers}.

\subsection{Embracing the symmetry}
\label{s:avoid-problem-anything}

A possible response to the implications of Theorem \ref{thm:nogo} may be to simply bite the bullet and embrace its consequences. For example, in MWI, we can pick a preferred space and a preferred TPS, but accept that there are more possible ways to do this, as suggested by Saunders \cite{Saunders1995TimeQuantumMechanicsAndDecoherence}. Everything will remain the same as in MWI, but there will be ``parallel many-worlds''. Each of the ``$3$D-space-dependent many-worlds'' are on equal footing with the others. In one-world approaches with state vector reduction we will also have multiple $3$D-space-dependent worlds, but different from the many-worlds, in the sense that they are not branches, they are just one and the same world viewed from a different $3$D-space. But if we accept all of the possible $3$D-space-dependent worlds allowed by symmetry, we should also try to show that this position does not have unintended consequences. In particular, how can one prove that passive travel in time and in parallel worlds (\cf Sec. \sref{s:passive_travel}) is not possible?
How does the existence of multiple worlds for the same state-vector affects the probabilities? This seems to entail deviations from the Born rule (\cite{Stoica2023TheRelationWavefunction3DSpaceMWILocalBeablesProbabilities}, Appendix C).

Maybe there is a reason to keep only some of the $3$D-space-dependent worlds.
To satisfy the Second Law of Thermodynamics, the universe has to satisfy the \emph{past hypothesis} \cite{BarryLoewer2020MentaculusVision}.
Similar constraints are required for the existence of decohering branching structures that only branch into the future and not into the past \cite{Wallace2012TheEmergentMultiverseQuantumTheoryEverettInterpretation}.
Moreover, even the simple requirement of having the records consistent with the past history of the system requires very special initial conditions, bordering conspiracy \cite{Stoica2022DoesQuantumMechanicsRequireConspiracy}.
All these problems require the initial state of the universe to be fine-tuned in a very special way.

But if the initial state itself depends on the choice of the candidate preferred structures, then most such choices would fail to ensure special enough initial conditions. Hence, we may have a way to reduce the number of possible alternative candidate preferred structures: choose only those $3$D-space structures and factorizations consistent with the Second Law of Thermodynamics, containing observers able to remember the past and not the future, and, in the case of MWI, ensuring that decoherence takes place properly to ensure branching towards the future only.

However, this is not restrictive enough to ensure uniqueness: if the Past Hypothesis, the condition of having valid records, and the conditions needed for the branching structure depend on the {\MQS} alone, they involve constraints that are invariant under unitary transformations, so there will still be more possible choices.
Moreover, if we have to rely on systems that count as observers to select the $3$D-space-dependent world satisfying these constraints, this may fail, for the simple reason that most such observers may be \emph{Boltzmann brains} \cite{schulman1997timeArrowsAndQuantumMeasurement,AlbrechtSorbo2004BoltzmannBrain} (also see \cite{Stoica2023AreObserversReducibleToStructures} and Sec. \sref{s:avoid-problem-MWI-wavefunction} for more about observers).

\subsection{Enforcing a preferred structure}
\label{s:avoid-problem-MWI-wavefunction}

Instead of using the abstract state vector, can we simply use the wavefunction, or simply extend the {\MQS} structure with a preferred TPS and a preferred $3$D-space structure?
Even if we extend the {\MQS} with an additional structure, one may object that the unitary symmetry makes any such structure  indistinguishable from any other one obtained by unitary symmetry from it, making the theory unable to predict the empirical observations, which clearly emphasize particular observables as representing positions, momenta \textit{etc}.

But such an objection would only be fair if we would apply the same standard to Classical Mechanics. In Classical Mechanics, we can make canonical transformations of the phase space that lead to similar problems like those discussed here for the Hilbert space. Yet, this is not seen as a problem, because we take the theory as representing real things, like particles and fields, propagating in space.
We may therefore think that we can just do the same and adopt the more common claim of MWI that the state vector is in fact a wavefunction.

To ground the wavefunction even more in the $3$D-space, we can even represent it as multiple classical fields on the $3$D-space, as shown by Stoica \cite{Stoica2019RepresentationOfWavefunctionOn3D,Stoica2022BackgroundFreedomLeadsToManyWorldsLocalBeablesProbabilities}. The representations from  \cite{Stoica2019RepresentationOfWavefunctionOn3D,Stoica2022BackgroundFreedomLeadsToManyWorldsLocalBeablesProbabilities} are fully equivalent to the wavefunction on the configuration space, work for all interpretations, and can provide them the necessary structure and a $3$D-space ontology.

Another possibility is Barbour's proposal, in which the worlds are configurations from a certain configuration space, specific to Barbour's approach to Quantum Gravity based on \emph{time capsules} and \emph{shape dynamics}, but which can easily be generalized into a general version of Everett's Interpretation. In Barbour's proposal, the wavefunction is interpreted as encoding the probabilities of these configurations \cite{Barbour2001TheEndOfTime}.

But the things are not that simple with Everett's Interpretation and MWI in general. The reason is that, in order to give an ``ontology'' to the branches, the solution is, at least for the moment, to interpret the physical objects as patterns in the wavefunction, and apply \emph{Dennett's criterion} that ``patterns are real things'' as Wallace calls it (\cite{Wallace2003EverettAndStructure} p. 93 and \cite{Wallace2012TheEmergentMultiverseQuantumTheoryEverettInterpretation} p. 50). For Dennett's notion of pattern see  \cite{Dennett1991RealPatterns}. The key idea can be stated as ``a simulation of a real pattern is an equally real pattern''. For a criticism by Maudlin of the usefulness of this criterion as applied by Wallace see \cite{Maudlin2014CriticalStudyDavidWallaceTheEmergentMultiverseQuantumTheoryEverettInterpretation} p. 798. 

We will have more to say about this in another paper, in the context of the results presented here \cite{Stoica2023AreObserversReducibleToStructures}. We will see that the solution depends on the ability of the preferred structure to guarantee experiences of the world in a way unavailable to other unitary transformations of that structure. And this depends on the theory of mind, since for example a computationalist theory of mind allows the transformed (``simulated'') patterns obtained by unitary transformations of ``real'' patterns to have the same experiences as the ``real'' ones, because whatever computation performed on the ``real'' patterns is supposed to yield conscious experiences, it is identical to a computation on the transformed patterns. Therefore, since at least Wallace's approach based on Dennett's idea of pattern, and in fact Everett's original idea and subsequent variations \cite{Tegmark2015ConsciousnessAsAStateOfMatter,Zurek1998DecoherenceEinselectionAndTheExistentialInterpretation,Zurek2009QuantumDarwinism}, are implicitly or explicitly committed to computational or functionalist theories of mind, enforcing a preferred structure leads us in the same place as the option of embracing the symmetry mentioned in \sref{s:avoid-problem-anything}. 

To have a single generalized basis or a single underlying $3$D-space, one may need to assume that not all patterns that look like mental activity of observers actually support consciousness, and the preferred structures are correlated to those that support it.
Maybe this requires a version of the von Neumann-Wigner Interpretation of QM, or a version of the \emph{Many Minds} Interpretation \cite{Zeh1970OnTheInterpretationOfMeasurementInQuantumTheory,AlbertLower1988InterpretingMWI_ManyMinds,Albert92QuantumMechanicsAndExperience} where consciousness is not completely reducible to a pattern as in the computationalist or functionalist theories of mind \cite{Stoica2023TheRelationWavefunction3DSpaceMWILocalBeablesProbabilities}.

\subsection{Breaking the unitary symmetry of the laws}
\label{s:avoid-problem-break-laws}

Maybe the problem is better solved in theories that actively break the unitary symmetry of the very laws of QM, by either modifying the dynamics or including objects like particles living in the $3$D-space.

An example is the \emph{Pilot-Wave Theory} \cite{deBroglie1956TheorieDoubleSolution,Bohm1952SuggestedInterpretationOfQuantumMechanicsInTermsOfHiddenVariables,BohmHiley1993UndividedUniverse,CushingFineGoldstein1996BohmianMechanicsAppraisal,DGZ1996BohmianMechanics} and variants like \cite{Nelson2020DynamicalTheoriesOfBrownianMotion}, which extend the {\MQS} with a preferred $3$D-space and point-particles with definite positions in the $3$D-space, and which breaks the unitary symmetry of the Hilbert space.

\emph{Objective Collapse Theories} \cite{GhirardiRiminiWeber1986GRWInterpretation} and variations like \cite{Pearle1989StochasticSpontaneousLocalization,Diosi1987GravitationalQuantumCollapse,Penrose1996OnGravitysRoleInQuantumStateReduction} also supplement the Hilbert space with $3$D positions, and the wavefunction collapses spontaneously into a highly peaked wavefunction well localized in the configuration space. Even if the $3$D-space is not assumed to be fixed, it may be possible to recover the configuration space one-collapse-at-a-time, and then the $3$D-space may emerge from the way interactions depend on the distance (see Remark \ref{rem:3dspace:nrqm:specs} and \cite{Albert1996ElementaryQuantumMetaphysics}).

\subsection{Breaking the unitary symmetry of the space of solutions}
\label{s:avoid-problem-break-solutions}

Could the needed symmetry breaking of the group of unitary transformations be due to restricting the physically allowed solutions, rather than on changing the laws? This would be a more conservative way to solve the problem, since it would not require to modify or supplement the {\schrod} dynamics, while keeping a single world. 

There are already known proposals that not all state vectors in the Hilbert space describe real physics, in approaches that try to maintain unitary evolution but select the physically allowed states so that the appearance of the state vector reduction is done without having to appeal to branching into many worlds or to state vector reduction. Such proposals were made by Schulman \cite{Schulman1984DefiniteMeasurementsAndDeterministicQuantumEvolution,schulman1997timeArrowsAndQuantumMeasurement,Schulman2017ProgramSpecialState}, 't Hooft \cite{tHooft2016CellularAutomatonInterpretationQM}, and Stoica \cite{Stoica2012QMGlobalAndLocalAspectsOfCausalityInQuantumMechanics,Stoica2017TheUniverseRemembersNoWavefunctionCollapse,Stoica2021PostDeterminedBlockUniverse}. 

Admittedly, these proposals seem ``conspiratorial'' or ``retrocausal'' as per Bell's Theorem \cite{Bell64BellTheorem,Bell2004SpeakableUnspeakable}, but the gain is relativistic locality and restoration of the conservation laws and of the {\schrod} dynamics, for a single world rather than the totality of worlds as in MWI.
And there are ways to interpret this less dramatically than as conspiratorial \cite{Stoica2021PostDeterminedBlockUniverse}.

Neither of these proposals assumes the {\MQS} as the only structure, in general they assume a space structure too. In this sense, they avoid the results in this article, although this may not be enough, as suggested in \sref{s:avoid-problem-MWI-wavefunction} and \cite{Stoica2023AreObserversReducibleToStructures}. 

However, we may ask if, by accepting as representing physical states only a subset $\hilbert_0\subset\hilbert$ of the Hilbert space $\hilbert$, where $\ee^{-\ii \wh{H}t}\hilbert_0=\hilbert_0$ for any $t\in\R$, the unitary group is sufficiently reduced to allow the recovery of space from the {\MQS} and $\hilbert_0$. The answer seems to be negative, since at least the one-parameter group generated by the Hamiltonian will not be broken, and therefore at least the consequences of Corollary \ref{thm:nogo_time}, and at least Problem \ref{thm:time_machine}, cannot be avoided in this way either.

But maybe the restriction to $\hilbert_0$ ensures these structures to uniquely emerge up to the time parameter, in which case, while the $3$D-space cannot be recovered, spacetime as a whole can be recovered, as suggested in Example \ref{ex:ergodic}. It may be the case that allowing only a subset $\hilbert_0$ of the Hilbert space can ensure this without requiring the Hamiltonian to be ergodic as in Example \ref{ex:ergodic}. 
An extreme example is $\hilbert_0=\{\ee^{-\ii \wh{H}t}\ket{\psi_0}|t\in\R\}$ for some $\ket{\psi_0}\in\hilbert$. This indeed gives a unique spacetime, given an emergent notion of $3$D-space. But this example is artificially constructed \emph{ad-hoc}, a genuine answer should of course be derived from some physical principles.
This possibility remains to be explored.

\bigskip
\noindent\textbf{Acknowledgement.}
The author thanks Louis Marchildon, Tim Maudlin, Heinrich P\"as, Daniel Ranard, Igor Salom, Jochen Szangolies, Iulian Toader, Lev Vaidman, Howard Wiseman, and the referees, for feedback and stimulating exchanges.


\address{Deptartment of Theoretical Physics, NIPNE---HH\\
M{\u a}gurele, Romania 077125\\
\email{cristi.stoica@theory.nipne.ro, holotronix@gmail.com}}

\end{document}